\newtheorem{theorem}{Theorem}
\newtheorem{corollary}{Corollary}
\newtheorem{lemma}{Lemma}
\newtheorem{proposition}{Proposition}
\newtheorem{definition}{Definition}
\newtheorem{remark}{Remark}
\newcommand{\junk}[1]{}
\newcommand{\james}[1]{
\noindent{\textcolor{blue}{\textbf{ JZ:} \textsf{#1}}}
}
\newcommand{\notshow}[1]{{}}
\definecolor{MyGray}{rgb}{0.8,0.8,0.8}
\begin{document}
\title{Signal to noise in matching markets}
%\title{Signal to noise in learning good matches}
\author{S. Matthew Weinberg\thanks{Computer Science, Princeton University. E-mail: smweinberg@princeton.edu. Work done in part while the author was a research fellow at the Simons Institute for the Theory of Computing.} \and James Zou\thanks{Stanford University. E-mail: jamesz@stanford.edu.}}
\addtocounter{page}{-1}
\maketitle
\begin{abstract} 

In many matching markets, one side ``applies'' to the other, and these applications are often expensive and time-consuming (e.g. students applying to college). It is tempting to think that making the application process easier should benefit both sides of the market. After all, the applicants could submit more applications, and the recipients would have more applicants to choose from. In this paper, we propose and analyze a simple model to understand settings where \emph{both sides} of the market suffer from increased number of applications.

%College application is expensive and time-consuming. It is tempting to think that making the application process easier should benefit both students and schools. Afterall, students would be able to apply to more schools and schools have more applicants to choose from. In this paper, we propose and analyze a simple model to understand settings where everyone---students and schools---suffers from increased number of applications. 

The main insights of the paper are derived from quantifying the signal to noise tradeoffs in random matchings, as applications provide a signal of the applicants' preferences. When applications are costly the signal is stronger, as the act of making an application itself is meaningful. Therefore more applications may yield potentially better matches, but fewer applications create stronger signals for the receiving side to learn true preferences. 

We derive analytic characterizations of the expected quality of stable matchings in a simple utility model where applicants have an overall quality, but also synergy with specific possible partners. Our results show how reducing application cost without introducing an effective signaling mechanism might lead to inefficiencies for both sides of the market.

%The main insights of our paper are derived from quantifying the signal to noise tradeoffs in random matchings. In the school application setting, noise arises because a school only observes, from an application, a noisy signal of how likely the student is to be a good fit at the school. Some students and schools have synergy, leading to increased utility for both partners in the event of a match, but the school might miss such students if it is swamped by applicants. More applications yield better potential students, but fewer applications yield stronger signals about synergy.  

%While this paper focuses on school applications as an illustrative example, our insights and techniques for quantifying signal and noise in random matchings can be applied to broader settings. 
\end{abstract}

\newpage
\thispagestyle{empty}
%!TEX root =  MAIN.tex

\section{Introduction}

College application is an expensive and time-consuming process. A student may apply to many colleges, and each application could take considerable effort to complete. Because of this, systems that reduce the cost of application, for example by allowing a common application to be submitted to many schools for free, would appear to benefit students. This may appear like a win-win because it seems  that colleges also stand to gain from receiving more applications. After all, having a larger pool of applicants to select from should make it more likely to find the students that are good fits. Many US universities now explicitly try to increase the number of applications. At the same time, lower acceptance rates are advertised as a badge of exclusivity, suggestive of higher quality student body. 

In this paper, we present and analyze a simple model that illustrates how when it becomes too easy to apply to more colleges, \emph{everyone}---students and colleges---might suffer. Students get into schools that are lower on their preference list and schools enroll students that are poor fits. The heart of the issue is signal to noise. Matching students with colleges is a learning problem. From an application, a college derives a noisy signal about how good of a fit is the student to that particular environment, and admission offers are made based on these noisy signals. Increasing the number of applications could increase the amount of noise in the entire matching system, making the learning more difficult and leading to a worse outcome for students and colleges. After all, when there is an cost (or at least an opportunity cost) to applying, the application itself carries a meaningful signal. We derive an analytic characterization of how the signal-to-noise trades off with the number of applications. While we motivate our analysis with the college-application example, the insights abouts signal and noise in multi-agent matchings is more general and can be applied in other settings. 

Since college admissions approximately follow a university-proposing deferred-acceptance procedure,\footnote{That is, colleges admit their favorite applicants, and students accept their favorite ``proposal.'' University ``wait-lists'' make the procedure even closer to a true deferred-acceptance protocol.} it is generally well-understood that students experience a tradeoff as the number of applications grows. Too few applications results in many students not being admitted at all, but too many applications results in students getting matched to universities further down their list (as the resulting admissions become more in line with the schools' preferences and less so with students')~\cite{Pittel89,KnuthMP90}. So already from existing theory on matching markets, one could conclude that the common app (i.e. a framework that enables a student to apply to many schools essentially for free) is not unilaterally beneficial for students, and some care is required to ensure that the application rate lies in a profitable range.

On the other hand, if schools can fully learn their preferences over students from reading applications, it is similarly well-understood that schools are more likely both to be matched at all, and to be matched with more preferable students as the application rate increases. However, there's much more to a student than can be learned simply through their application, and students often have some latent factors that affect how well they fit a particular college. For instance, maybe a student fits in with the culture at a specific school (and is therefore more likely to succeed than a comparable student who doesn't), or has a significant other already attending (and is therefore more likely to succeed without the additional travel), etc.\footnote{In these examples, we are making the mild assumption that school preferences are at least positively correlated with the likelihood of student success.} Students have a pretty good idea of whether or not they synergize with a university, but cannot genuinely convey this to universities through an application - clever students will simply claim synergy on all their applications. So without a formal signaling scheme (such as that used by the Economics academic job market - more on that in Section~\ref{sec:related}), the only signal the universities receive about possible synergy comes from the students' choices of where to apply.

In light of this, one might reasonably expect that the common app is not even unilaterally beneficial for universities, as they too experience a tradeoff: More applications results in potentially preferable admits, but also in a weaker signal regarding synergy. The main contribution of this paper is a simple model illustrating this tradeoff, and a clean analysis of the optimal application rate as a function of modeling parameters. Concrete details on the model appear in Section~\ref{sec:prelim}, and we briefly overview the model and our results below.

\subsection{Overview of Results}
To map our model to the example of college applications, it is better to think of restricting attention to the top-tier students and top-tier universities. Within this set of universities, one could imagine that preferences over universities are roughly symmetric (i.e. independently, uniformly at random for all students). We model there being a synergy between every student and their favorite college (so this is already captured in the student preferences). When colleges receive applications, applications from students with synergy may appear stronger on average than those without, but otherwise the preferences based on applications are similarly symmetric and independent across schools (formal definition in Section~\ref{sec:prelim}). Based on these noisy signals, each college offers admissions to its ``strongest'' applicants. Students with multiple offers selects their favorite to accept, (perhaps) additional admission offers are made by colleges and the process iterates.

Aside from postponing details about how signal strength corresponds to synergy (which is a modeling parameter), this is the entire model. We show in Section~\ref{sec:results} how to compute, for any application rate $K$ (number of applications per student), the expected number of schools who get their first, second, etc. choice student, as well as the expected number of schools who accept a student with synergy. We show that the number of students getting their first, second, etc. choice school must satisfy a system of (non-linear) equations. We derive the system of equations by considering a simple thought experiment, detailed in Section~\ref{sec:results}. 

In Section~\ref{sec:applications}, we consider some natural instantiations of our modeling parameters and derive the optimal application rate. The key takeaway from both sections together is that our simple model, which only concerns preferences and synergy, and not additional features like the cost of processing applications, is already rich enough to exhibit a tradeoff in utilities for both schools and students as the number of applications grows. 

While we use the language of students and universities throughout the paper, this should be viewed as a motivating example. Our analyses shed some light on the college application process (i.e. one should be careful about making applications too easy without a formal signaling scheme), but our model is intentionally too simple to make specific claims about the current practice (i.e. of the form ``the existence of the common app helps/hurts the average student/university''). We consider our main contribution to be a significantly simpler model than in prior literature that reaches the same qualitative conclusions (more on that below). 

\subsection{Related Work}\label{sec:related}
Following numerous seminal papers on stable matchings, e.g.~\cite{GaleS62, McVitieW71, GaleS85, Roth82}, much is known about algorithms for finding stable matchings, when matchings tend to favor one side or the other, and when they can be manipulated. We will not survey the field in its entirety, but discuss the most related works on three directions below.

\paragraph{Stable Matchings with Random Preferences.} Early work of~\cite{Pittel89,KnuthMP90} established a large gap between the utility of the proposing side and the receiving side for the deferred-acceptance algorithm in perfectly random markets with $n$ parties on each side of capacity $1$: the proposing side gets a match on average equal to their $\ln(n)^{th}$ favorite partner, whereas the receiving side on average gets their $(n/\ln n)^{th}$ partner (interestingly, it was recently shown that adding a single party to the proposing side causes the values to swap and the receiving side becomes better off~\cite{AshlagiKL16}). Recent work has also used random preferences as a lens to study various tie-breaking rules in school choice~\cite{Arnosti16, AshlagiNR15,AshlagiS14,AshlagiN16}. The works with techniques most related to ours are~\cite{ImmorlicaM15,KojimaP09}, who study random markets where one side proposes but to a vanishing fraction of the receiving side. The focus of these works is on understanding when such markets are manipulable.

While our techniques are indeed similar to those used in these works, to the best of our knowledge our approach for computing the quality profile of matches via systems of non-linear equations is novel, and could be of independent interest for related work along these lines.

\paragraph{Stable Matchings with Truncated Preference Lists.} In addition to the already discussed works of~\cite{ImmorlicaM15,KojimaP09}, truncated preference lists (e.g. where participants propose to/accept proposals from a proper subset of potential partners) have also been extensively studied outside of random markets~\cite{GusfieldI89, KobayashiM09, GonczarowskiF13, PiniRVW11}, mostly with an emphasis on how protocols can be manipulated via truncating preference lists. In this work we don't address strategic manipulations, and only study truncation in random markets.

\paragraph{Signaling in Matching Markets.} There is already much literature concerning matching markets with information asymmetries. Numerous works, both theoretical and empirical, show that the ability to signal interest in a potential partner can help improve match quality as long as there is at least some opportunity cost for signaling (e.g. only $K$ signals can be sent)~\cite{HalaburdaPPY16,RothX97,CheK15, LeeNKK11,LeeN15,ColesCLNRS10}. In particular,~\cite{ColesCLNRS10} discusses a signaling scheme used on the Economics academic job market: candidates are allowed up to two signals to send along with an application to alert the recipient to their interest if it might not be clear from the application alone (e.g. if an exceptionally strong candidate has family near a lower-tier school and might prefer to go there over higher ranked schools. Apparently it is not common practice to signal top-tier schools, as they are already aware of every applicant's high interest). 

One can interpret sending an application itself as a form of signaling, but in this work we do not discuss formal signaling mechanisms outside of the application process itself. It is worth noting that well-designed signaling schemes used in practice do have a positive impact on match quality~\cite{LeeNKK11,LeeN15,ColesCLNRS10}, even when signals are free but limited. So there is certainly cause to believe that understanding the tradeoff between application rate and match quality could benefit markets without formal signaling schemes.\footnote{Note that the common app does have some form of signaling scheme via early decision/early action.}

\paragraph{Tradeoffs as Application Rate Increases.} Somewhat surprisingly, there is not much work on understanding the tradeoffs experienced by the receiving side as application rates increase. To the best of our knowledge, this was first studied in recent work of~\cite{ArnostiJK15}, who show that when it is costly for the receiving side to evaluate applications and users on both sides arrive and depart, that the receiving side suffers under excessive applications. While their model is more realistic and closer to applications like oDesk, our model is comparatively simpler and cleaner for theoretical study. For instance, we identify a tradeoff for the receiving side as application rate increases solely due to information asymmetry, without appealing to the cost of evaluating applications. Clearly our simple model is less apt for practice, but perhaps more apt for tractable theory and intuition. 

\paragraph{The Effect of Signaling on College Admissions.} A recent article in Washington Monthly discussed issues with the college application process due to increased number of application~\cite{Kim16}. For example, schools have difficulty in predicting how many accepted students will choose to enroll, sometimes leading to prohibitive levels of under/over-enrollment (instances of miscalculation both ways have been severe enough to cause presidents' resignations). To cope with this, universities are trying to use ad hoc signalling measures such as early decision/early action, or ``did this applicant come for a campus visit?'' While there's nothing inherently wrong with this, both metrics are believed to benefit wealthier applicants who can make an early decision without competing financial aid packages, or afford campus visits. Addressing such complex issues requires a solid theoretical understanding of the signal-to-noise tradeoff in the application process.

%\mattnote{Should we include Discussion and Future Work? Commented out below.}
\iffalse
\subsection{Discussion and Future Work}
We provide a clean and simple model for matching markets where one side has more information about match quality than the other, but there is no established signaling procedure for them to convey this information outside of the applications themselves. We show that our model is already rich enough to exhibit a tradeoff for both sides as the number of applications grows, without needing to include additional modeling features. Still, the parameters we choose to investigate in this work are exactly those for which a clean proof reveals the appropriate insight. It is an important direction for future work to investigate our model in regimes where these simple proofs no longer work. As an example, modeling synergy between students and schools besides their favorite would require tools beyond those developed in~\cite{ImmorlicaM15} to connect the student- and school-optimal stable matchings. As another example, modeling simultaneously the inefficiency caused by poor signaling \emph{and} application/processing cost would require additional reasoning about equilibria regarding how students choose which applications to send, and how schools decide which applications to process.
\fi
%!TEX root =  MAIN.tex

\section{Model and Preliminaries}\label{sec:prelim}

\subsection{Our Model}\label{sec:model}

\textbf{Students and Universities.} There is a set $S$ of students and $U$ of universities. Each student has a preference ordering over $U \cup \{\bot\}$, and each university has a preference ordering over $S \cup \{\bot\}$ ($\bot$ represents no match). We denote by $n = |S|$ the number of students, and $M = |U|/|S|$ the ratio of universities to students. We will treat $M$ as a constant and let $n$ grow. One can simulate a student $s$ not applying to a university $u$ by having $s$ rank $\bot$ above $u$. 

\textbf{Capacities.} Each student will accept admission to at most one university. Each university can enroll at most $L$ students. We will abuse notation and refer to a \emph{matching} as any mapping from students to universities where each student is matched with at most one university, and each university is matched with at most $L$ students.

\textbf{Student Preferences.} Students' preferences are independent and chosen uniformly at random over the space of all university orderings. If a student's favorite university is $u$, we say she is a \emph{special-$u$} student.

%\textbf{Student Utilities.} Students get utility $\sigma_j\geq 0$ for attending their $j^{th}$ favorite university, and $0$ for being unmatched. \james{Not needed here?}

\textbf{University Preferences.} When we refer to university preferences, we mean their preferences only having seen applications, which are not necessarily perfectly correlated with the true utility to a university for accepting a certain student. University preferences are drawn in the following way: There are two distributions $D$ and $D'$ over $\mathbb{R}_+$. For university $u$, and each special-$u$ student, the university observes a signal drawn from $D$. For all other students, the university observes a signal drawn from $D'$. Then, the university ranks all students in decreasing order of signal. If $D = D'$, then the university's preferences are also uniformly random. 

\textbf{Utilities.} We do not explicitly model student utilities, other than that students get at least as much utility for going to more preferred schools. Similarly, we do not explicitly model university utilities other than that they get higher base utility for successfully recruiting more preferred students, and bonus utility for recruiting a student with synergy. All of our analysis directly studies the expected quality of matches (in terms of where partners rank on respective preference lists, and whether or not there is synergy), and can therefore be readily applied to any utility model with the preceding properties.

%\textbf{University Utilities.} University $u$ gets utility $\tau_j$ for successfully recruiting their $j^{th}$ favorite student, and a bonus $\lambda$ utility if their recruited student is special-$u$ (and $0$ for being empty). The idea is that students cannot convey everything about themselves in a short application. Universities have utility based on what they see, but also based on possible hidden synergy with specific students. The students themselves are aware of this synergy, which is represented in their preferences, but cannot sincerely convey this through an application. \james{Not needed here?}

\textbf{Applications and Admissions.} Every student sends applications to their $K$ favorite schools. We do not model how it comes to be that this is the case (perhaps it is hard-coded into a centralized system, perhaps application costs/burdens are tuned to make this an equilibrium, etc.). Once all applications are submitted, the school-proposing deferred-acceptance procedure begins: school admit their $L$ favorite students and waitlist the rest (among the applications that it received). Students deny all but their favorite offer. Universities whose offers were denied make new offers to their next favorite students, and this process repeats until every university either has $L$ admitted students or no more offers to make. 
%By Theorem~\ref{thm:GS}, the resulting match is the university-optimal stable matching (given the student preferences resulting from the submitted applications, specifically that they ``prefer'' $\bot$ to any school to which they did not apply). 

\textbf{Goal.} Our goal is to understand the expected match quality for the average student and university for varying settings of parameters as $n \rightarrow \infty$, while $K, L, M$ remain constant.  

\subsection{Stable Matching Preliminaries}\label{sec:prelimstable}
In this section, we review some basic preliminaries from stable matching. 

\textbf{Stable Matchings.} A matching $\mathcal{M}$ from $S$ to $U$ (where perhaps not everyone is matched) is \emph{stable} if it has no \emph{blocking pairs}. A blocking pair is a student $s$ and a university $u$ such that $s$ prefers $u$ to their partner in $\mathcal{M}$, and $u$ prefers $s$ to one of their partners in $\mathcal{M}$ (if $s$ is unmatched, their partner is $\bot$. If $u$ is matched to $\ell < L$ students, the remaining $L-\ell$ partners are $\bot$). The idea is that $\mathcal{M}$ is not stable because $s$ and $u$ prefer to break their matches in $\mathcal{M}$ and partner with each other instead. We do not consider university preferences over sets of students. 

\textbf{Finding Stable Matchings.} Seminal work of Gale and Shapley~\cite{GaleS62} shows that stable matchings always exist and can be found efficiently via the deferred acceptance algorithm: Initially, all universities are unmatched, and all students are matched to $\bot$. One at a time, an arbitrary unfilled university $u$ is selected to propose (unfilled universities are those with fewer than $L$ matches). The university proposes to their favorite student who has not yet rejected them. If that student is $\bot$, the proposal is accepted and $u$ becomes matched to $\bot$. Otherwise, the recipient, $s$, compares $u$ to their current match, $u'$, and chooses the university they like best. If it is $u$, $s$ is now matched to $u$ and $u'$ becomes unfilled. Otherwise, $u$ stays unfilled (and $s$ stays matched to $u'$). The process terminates when no universities are unfilled. 

\textbf{Stable Partners.} One can define, for all universities $u$, the set $S(u)$ of students that are ever matched to $u$ in any stable matching. These are called the stable partners of $u$. Similarly, one can define $U(s)$ to be the set of universities that are ever matched to $s$ in any stable matching. Much is known about the structure of stable partners:

\begin{theorem}\label{thm:GS}[\cite{GaleS62}]
The stable matching output at the end of the university-proposing deferred-acceptance algorithm is unique, and independent of the order in which universities propose (they may also propose simultaneously). Furthermore, for all universities $u$, $u$'s partners when the algorithm terminates is its $L$ favorite students in $S(u)$.
\end{theorem}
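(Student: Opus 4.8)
The plan is to prove the two assertions of the theorem — order-independence and uniqueness of the university-proposing deferred-acceptance outcome, and the characterization of $u$'s final partners as its $L$ favorite members of $S(u)$ — from a single structural lemma, adapting the classical Gale--Shapley argument to the many-to-one (capacity-$L$) setting.

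First I would record the easy facts that every run of the algorithm terminates (each university proposes to each student at most once, so there are at most $|U|\cdot n$ proposals) and that its output is a stable matching: if $(s,u)$ blocked the output, then since $u$ proposes in decreasing order of preference and ends either holding a student it likes less than $s$ or with a free slot, it must have proposed to $s$ at some point; but then $s$ rejected $u$ in favor of a university it likes at least as well, and a student's held offer only ever improves, so $s$'s final partner is preferred to $u$ — contradicting that $(s,u)$ blocks. This makes $S(u)$ and $U(s)$ well defined and nonempty of the right flavor to reason about.

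The heart of the proof is the lemma: \emph{in any run of the algorithm, no university $u$ is ever rejected by a student $s$ with $(s,u)$ in some stable matching $\mathcal{M}$.} I would prove this by induction on the steps of the run. Let step $t$ be the first violation, with $u$ rejected by such an $s$, and let $u'$ be the university $s$ holds instead at step $t$, so $s$ prefers $u'$ to $u=\mathcal{M}(s)$; I will show $(s,u')$ blocks $\mathcal{M}$. It remains to check $u'$ prefers $s$ to one of its $\mathcal{M}$-partners (or has fewer than $L$ of them). Suppose instead $u'$ weakly prefers every student of $\mathcal{M}(u')$ to $s$ and $|\mathcal{M}(u')|=L$; since each such student is matched in $\mathcal{M}$ to $u'\neq u$ it differs from $s$, hence is strictly preferred to $s$. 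As $u'$ proposes strictly down its list and has (by step $t$) proposed to $s$, it has already proposed to all $L$ students of $\mathcal{M}(u')$; and by minimality of $t$ none of them has yet rejected $u'$, so $u'$ was holding all $L$ of them before it ever reached $s$ — hence filled — and would never have proposed to $s$, a contradiction. This closes the induction.

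From the lemma the rest is short. Fix a run with output $\mathcal{M}^*$; clearly $\mathcal{M}^*(u)\subseteq S(u)$. If some $s\in S(u)\setminus\mathcal{M}^*(u)$ were preferred by $u$ to a member of $\mathcal{M}^*(u)$, or if $|\mathcal{M}^*(u)|<L$, then $u$ would have proposed to $s$ and been rejected by it, impossible by the lemma (in the second case this forces $S(u)\subseteq\mathcal{M}^*(u)$, i.e.\ equality). Hence $\mathcal{M}^*(u)$ is exactly the top $\min(L,|S(u)|)$ students of $S(u)$. Since this description of $\mathcal{M}^*$ makes no reference to the proposal order and $S(u)$ depends only on the collection of all stable matchings, the output is identical for every run — simultaneous proposals being merely a particular serialization — which gives uniqueness, order-independence, and the stated structure. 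The one delicate point is the capacity bookkeeping inside the lemma, namely the claim that $u'$ is already full by the time its preference list reaches $s$; this is exactly where the capacitated argument departs from the unit-capacity case, and the remainder is a routine transcription of Gale--Shapley.
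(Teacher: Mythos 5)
Your proof is correct. Note that the paper offers no proof of this statement at all --- it is quoted as a classical result of Gale and Shapley --- so there is nothing to compare against; what you have written is the standard ``proposing side gets its optimal stable partners'' argument, correctly transcribed to the capacitated, many-to-one setting. The one step that genuinely needs care is exactly the one you flag: in the induction, the $L$ students of $\mathcal{M}(u')$ all precede $s$ on $u'$'s list, none of them has rejected $u'$ before the first violation, so $u'$ would have been holding all $L$ of them and hence been filled (not proposing) when it supposedly reached $s$. That is handled correctly, as is the case split at the end between $|\mathcal{M}^*(u)|=L$ and $|\mathcal{M}^*(u)|<L$ (the latter forcing $S(u)=\mathcal{M}^*(u)$, consistent with Theorem~\ref{thm:Roth}). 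Deriving order-independence and uniqueness from the fact that the output admits a description purely in terms of $S(u)$ is exactly the intended reading of the cited result.
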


\subsection{Comparing Student and University Proposals}
In our model, once the set of applications have seen sent out, the schools form their (noisy) preferences among the received applications and applies the school-proposing deferred-acceptance algorithm. Deferred-acceptance is a reasonable approximation to the wait-list system used by many colleges. We can imagine a different, hypothetical way that the matching can be performed. Each student forms their preferences as described in our model. Then the student-proposing deferred acceptance is ran. Each student applies to her favorite school. The school either rejects the student---if it has already filled its slots with better applications as judged by the noisy signal---or tentatively accepts the student. Students who are rejected then apply to their next favorite schools, until each student makes up to $K$ applications. If a school receives a better application, it can reject a student that it had tentatively accepted before. We emphasize that this is just a thought-experiment model and is \emph{not} how school  application  works. In fact it sounds quite different from the school-proposing deferred-acceptance algorithm of our actual model. However,~\cite{ImmorlicaM15,KojimaP09} prove that in settings similar to ours, the resulting matchings are nearly identical. Using their techniques, we're able to show the same of our model as well:

%we can prove that given how our student and school's preferences are generated, this student-proposing deferred-acceptance reaches essentially the same matching as our original school-proposing deferred-acceptance. 

\begin{theorem}\label{thm:equiv}
In our model, for constant $K, L, M$, and as $n \rightarrow \infty$, any two stables matchings are identical except for an $o(1)$ fraction of the students. In particular, the matchings produced by the school-proposing deferred-acceptance and the student-proposing deferred acceptance algorithms are the same except for an $o(1)$ fraction of the students. 
\end{theorem}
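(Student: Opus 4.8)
The plan is to adapt the ``essentially unique stable matching'' (small core) technique of \cite{ImmorlicaM15,KojimaP09} to our model, where the only genuinely new difficulty is the correlation between the two sides created by synergy. First I would reduce the claim to a counting statement. By standard structure theory for stable matchings (the lattice theorem, together with Theorem~\ref{thm:GS}), the student-proposing deferred acceptance (DA) algorithm outputs the student-optimal stable matching $\mu_S$, the school-proposing DA algorithm outputs the student-pessimal (equivalently school-optimal) stable matching $\mu_U$, and every stable matching $\mu$ satisfies that $s$ weakly prefers $\mu_S(s)$ to $\mu(s)$ to $\mu_U(s)$. Hence a student $s$ has the same partner in every stable matching precisely when $\mu_S(s)=\mu_U(s)$, i.e.\ when $s$ has a unique stable partner (and the set of unmatched students is common to all stable matchings by the rural-hospitals theorem). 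So it suffices to show
\[
\E\big[\#\{s\in S : s\text{ has two or more distinct stable partners}\}\big]=o(n),
\]
which immediately gives that the two named algorithms, and any two stable matchings, agree on a $1-o(1)$ fraction of students in expectation; a standard concentration step upgrades this to ``with high probability''.

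Next I would fix a single student $s$ and bound $\Pr[\,s\text{ has }\ge 2\text{ stable partners}\,]=o(1)$. Run the student-proposing DA to obtain $\mu_S$, with $u=\mu_S(s)$. As in \cite{ImmorlicaM15,KojimaP09}, the existence of a second stable partner for $s$ is witnessed by a \emph{rejection chain}: release $s$ from $u$ and keep running DA; this dislodges a sequence of students, each proposing further down her (length-$K$) list and re-matching a chain of schools, until the chain dies at a school with a free seat or at a student who runs out of schools, and $s$ gets a second stable partner only if this chain is nontrivial and eventually loops back to $s$. The crucial feature of our market is that each student applies to only $K=O(1)$ schools while there are $Mn=\Theta(n)$ schools (so each school gets $O(1)$ applications in expectation); revealing preferences on the fly as the chain explores, every step extends the chain with probability bounded away from $1$ and ``loops back'' onto an already-visited school---the event needed to close a rotation through $s$---with probability $O(1/n)$ per step. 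The chain is therefore a subcritical exploration process of expected length $O(1)$, and the probability it ever closes up, hence $\Pr[|U(s)|\ge 2]$, is $o(1)$ (quantitatively $O(\log n/n)$); summing over the $n$ students gives the displayed bound.

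The hard part is handling the synergy-induced correlation, which is where our model departs from \cite{ImmorlicaM15}: a special-$u$ student draws her signal at $u$ from $D$ rather than $D'$, so students ranking $u$ first also tend to rank high in $u$'s ranking. I would neutralize this by conditioning first on each student's favorite school, which fixes, for every $u$, the set of special-$u$ students. This set is distributed as $\mathrm{Binomial}(n,1/(Mn))$, hence has size $O(1)$ in expectation and, with high probability simultaneously over all schools, size $O(\log n/\log\log n)$; so it perturbs each school's preference order in only $O(1)$ positions. Conditioned on the special sets, the remaining randomness---the tails of students' preference orders and the i.i.d.\ draws from $D$ and $D'$---is again fresh and exchangeable enough that the on-the-fly revelation argument still applies: when the chain reaches a school, the student it displaces and the next school a displaced student proposes to are essentially uniform over $\Theta(n)$ alternatives, up to the bounded special-set distortion, which moves the per-step probabilities only by a $1\pm o(1)$ factor. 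School capacities $L>1$ are then dealt with by the usual device of splitting each school into $L$ unit-capacity clones (or directly following \cite{KojimaP09}), and the expectation-to-high-probability strengthening is routine. I expect the careful bookkeeping around the special sets---verifying that their presence cannot push the exploration process into the supercritical regime---to be the main technical obstacle.
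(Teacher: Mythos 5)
Your proposal is correct and follows essentially the same route as the paper: reduce via the lattice structure and the rural-hospitals theorem to bounding the expected number of agents with multiple stable partners, then bound that by a rejection-chain argument with deferred decisions, using the fact that a constant fraction of schools never receive a proposal so each chain dies in $O(1)$ expected steps, and neutralize the synergy correlation by first conditioning on every student's favorite school (exactly the paper's three-stage sampling). The only cosmetic difference is that you count students with two or more stable partners while the paper counts universities $u$ with $|S(u)|>L$ via the Kojima--Pathak algorithm; the underlying estimate $\E[2^L L/(X-L+1)]=o(1)$, where $X=\Theta(n)$ is the number of unfilled schools, is the same.
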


We defer the proof of Theorem~\ref{thm:equiv} to Section~\ref{sec:equiv_proof}. We shall analyze the student-proposing algorithm to derive an analytic characterization for the number of students that get into their top-choice schools and nontop-choice schools. Because Theorem~\ref{thm:equiv} guarantees that this outcome from student-proposing is essentially identical to the outcome from running our actual model using school-proposing, we know that the same expressions characterize the resulting matching from school-proposing deferred acceptance as well.

%!TEX root =  MAIN.tex

\section{Solving our model}\label{sec:results}
In this section, we provide a closed form for computing the student and university utility as a function of parameters. To begin, consider the following thought experiment: Say we were to claim that as $n\rightarrow \infty$ that $x_i n \pm o(n)$ students wound up proposing to their $i^{th}$ favorite university in the student-optimal stable matching with probability $1-o(1)$. Then it better also be the case that $(1-x_i)n \pm o(n)$ students wound up matched to one of their top $i-1$ universities. This is simply because students who propose to their $i^{th}$ favorite university are exactly those who were rejected from all of their top $i-1$. 

Because of how student preferences are drawn in our model (namely, they are uniformly at random), this means that a total of $\sum_{i = 1}^{K} x_i n \pm o(n)$ proposals are made to uniformly random universities.\footnote{Technically, they are not quite uniformly at random, because the second proposal from a student is guaranteed not to go to her first choice. But since $K$ is fixed, as $n \rightarrow \infty$, this difference only impacts an $o(1)$ fraction of proposals.} Label all proposals from a student to their $i^{th}$ favorite university as a ``type $i$'' proposal. We know that in the student-proposing deferred acceptance, every university winds up matched to the best $L$ proposals they received during the run of the algorithm. Therefore, we could compute, given that $x_i n \pm o(n)$ type $i$ proposals are made, and given how university preferences are drawn over different kinds of proposals (depends only on $D, D'$), the expected number of type $i$ proposals that are \emph{accepted} in the end, for all $i$. We compute this for specific choices of $D, D'$ in Section~\ref{sec:applications}, but for now denote this as a a function parameterized by $D, D'$:

\begin{definition}
$f^{D, D',M,L}_i(\vec{x}) \cdot n$ denotes the expected number of type $i$ proposals that are accepted when $x_i n$ type $i$ proposals are made in total for $i \in [K]$, each to one of $Mn$ uniformly random universities, and each university selects their favorite $L$ proposals randomly according to preferences based on $D, D'$. Specifically, for each type $1$ proposal, the university will sample a signal from $D$. For each type $>1$ proposal, the university will sample a signal from $D'$. Then the university will select the $L$ students with the highest signals. 
\end{definition}

So now let's conclude our thought experiment: If indeed $x_2 n \pm o(n)$ students wound up proposing to their second favorite university, it better be the case that $x_2\cdot n = x_1\cdot n - f_1^{D, D',M,L}(\vec{x})\cdot n \pm o(n)$. Otherwise, there is an inconsistency in our claim: as the students whose top proposal is not accepted are exactly those who propose to their second choice. Similarly, if $x_3n \pm o(n)$ students wound up proposing to their third favorite university, it better be the case that $x_3 \cdot n = x_2\cdot n - f^{D, D',M,L}_2(\vec{x})\cdot n \pm o(n)$. The point is that once we know how many type $i$ proposals are made, for all $i$, we can turn around and compute the number of such proposals accepted (up to $o(n)$, with probability $1-o(1)$). But also, once we know the total number of proposals accepted, we can turn back around and compute how many proposals of each type were made (with probability $1-o(1)$). So if we are correct in guessing $\vec{x}$, we better get back $\vec{x} \pm o(n)$ after going back and forth. This provides a system of equations that $\vec{x}$ must satisfy if indeed we guessed correctly. This intuition is captured in Theorem~\ref{thm:main} below, whose proof consists mostly of formalizing the thought experiment above (which we therefore defer to Appendix~\ref{app:results}). In the statement below, note that $y_1 = 1$, as every student proposes to their top choice (for ease of notation, define $y_0 = 1$ as well, and $f_0^{D, D', M, L}(\vec{x}) = 0$ for all $\vec{x}, D, D', M,  L$). 

\begin{theorem}\label{thm:main}
Let $y_i \cdot n$ denote the expected number of students who propose to their $i^{th}$ favorite school in the student-optimal stable matching. Then: 
\begin{equation}\label{eq:main}
y_i = y_{i-1}- f_{i-1}^{D, D',M,L}(\vec{y}) \pm o(1), \ \forall i\in [K].
\end{equation}
\end{theorem}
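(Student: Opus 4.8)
The plan is to formalize the thought experiment from Section~\ref{sec:results} into a two-step argument: one direction says that a consistent guess $\vec{y}$ must satisfy~\eqref{eq:main}, and the companion reasoning (already baked into the definition of $f_i$) says that the $f_i$'s correctly predict acceptances given the proposal counts. Concretely, I would let $y_i n$ be the \emph{true} expected number of students proposing to their $i^{th}$ favorite school in the student-optimal stable matching, and then argue about the actual run of the student-proposing deferred-acceptance algorithm. Since this run is what produces the student-optimal matching (Theorem~\ref{thm:GS}), I can reason about it directly rather than about an abstract guess.

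First I would establish the combinatorial identity at the heart of the claim: a student proposes to her $i^{th}$ favorite school if and only if she was rejected from each of her top $i-1$ favorites. Hence the set of type-$i$ proposers is exactly the set of type-$(i-1)$ proposers minus those whose type-$(i-1)$ proposal was ultimately accepted. This is an exact identity on every realization, giving $Y_i = Y_{i-1} - A_{i-1}$ where $Y_i$ is the (random) number of type-$i$ proposers and $A_{i-1}$ the (random) number of accepted type-$(i-1)$ proposals. Taking expectations gives $y_i n = y_{i-1} n - \E[A_{i-1}]$ exactly. The remaining work is to show $\E[A_{i-1}] = f_{i-1}^{D,D',M,L}(\vec{y}) n \pm o(n)$, i.e., that the expected number of accepted type-$(i-1)$ proposals matches the function $f_{i-1}$ evaluated at the true vector $\vec{y}$.

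The second step is the substantive one: I would argue that, conditioned on the high-probability event that $Y_j = y_j n \pm o(n)$ for all $j\in[K]$ (which follows from concentration — the relevant quantities are sums of weakly-dependent indicators, so a bounded-differences / Azuma-type argument applies since $K,L,M$ are constant), the number of accepted type-$(i-1)$ proposals is determined, up to $o(n)$, purely by the multiset of proposal types received by each university and the signal draws. Because student preferences are uniform, each proposal lands at an (essentially) uniformly random university independently — the footnote caveat that a student's later proposals avoid her earlier choices affects only an $o(1)$ fraction and can be absorbed into the error term. A university ends up matched to the best $L$ proposals it ever receives (Theorem~\ref{thm:GS}), and whether a given type-$(i-1)$ proposal is among a university's top $L$ depends only on $D, D'$ and the types of the competing proposals. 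This is precisely the quantity $f^{D,D',M,L}_{i-1}$ computes by definition. So $\E[A_{i-1} \mid Y_j = y_j n \pm o(n)\ \forall j] = f_{i-1}^{D,D',M,L}(\vec{y}) n \pm o(n)$, and since the conditioning event has probability $1-o(1)$ and $A_{i-1} \le n$, the unconditional expectation agrees up to $o(n)$ as well. Combining with the exact identity yields~\eqref{eq:main}.

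The main obstacle I anticipate is making the ``number of accepted proposals is a function of proposal counts up to $o(n)$'' step fully rigorous, since $f_i$ is defined via an idealized process where proposals are exactly uniform and independent, whereas in the real algorithm the proposal destinations are correlated with prior rejections (a student's $i^{th}$ proposal goes to a school she ranks $i$-th, which is correlated with where her earlier proposals went and got rejected). I would handle this by a coupling argument: couple the real process to the idealized one of Definition~\ref{eq:main}'s $f_i$, show the two differ in where only $o(n)$ proposals are routed (because each student contributes at most $K = O(1)$ proposals and the ``collision'' events — a later proposal of a student landing on a school she already applied to — are $o(1)$-probability per proposal), and then note that rerouting $o(n)$ proposals changes the count of accepted proposals at each university by at most $O(1)$ per affected university, hence $o(n)$ in total. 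The concentration estimates needed for $Y_j$ are routine given constant $K,L,M$, so I would state them and defer details; the coupling bookkeeping is where the real care is required, and this is presumably why the authors defer the full argument to Appendix~\ref{app:results}.
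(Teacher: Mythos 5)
Your proposal is correct and follows essentially the same route as the paper: the combinatorial identity that type-$i$ proposers are exactly the rejected type-$(i-1)$ proposers, combined with concentration of the proposal counts around $y_j n$ and of the acceptance counts around $f_{i-1}^{D,D',M,L}(\vec{y})\,n$ (the paper's Lemmas on accepted proposals, proved via negative correlation and a Chernoff--Hoeffding variant rather than your Azuma-type bound). The coupling between the real proposal destinations and the idealized uniform process, which you rightly flag as the delicate point, is the issue the paper dispatches with the footnote about later proposals avoiding earlier choices affecting only an $o(1)$ fraction of proposals.
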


To properly apply Theorem~\ref{thm:main}, note that it doesn't suffice to simply find a point in $[0,1]^K$ solving the system of equations, as there might be multiple. So in order to claim that a solution to the system of equations is in fact the expected number of type $i$ applications made for all $i$, we must also show that it is the unique solution. Fortunately, it is also the case that every solution implies the existence of a stable matching with roughly the same distribution of accepted proposals with probability $1-o(1)$. So if there were multiple solutions, this would imply that often there exist two stable matchings that differ on a $\Theta(n)$ fraction of students, which contradicts Theorem~\ref{thm:equiv}. We prove this below in Section~\ref{sec:unique}. 

\subsection{The Solution is Unique}\label{sec:unique}
In this section, we prove the following theorem:
\begin{theorem}\label{thm:unique}
Let $\vec{y}$ be a solution to the system of equations~\eqref{eq:main}. Then with probability $1-o(1)$ when preferences are drawn according to our model, there exists a stable matching where $(y_i-y_{i+1})n \pm o(n)$ students are matched to their $i^{th}$ favorite school for all $i \in [K-1]$.
\end{theorem}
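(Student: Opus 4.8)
The plan is to show that any solution $\vec{y}$ to~\eqref{eq:main} is ``realized'' by an actual run of the student-proposing deferred-acceptance algorithm, in the sense that with probability $1-o(1)$ the number of students proposing to their $i^{\text{th}}$ favorite school is $y_i n \pm o(n)$. Given Theorem~\ref{thm:GS}, the number of students matched to their $i^{\text{th}}$ favorite school is then exactly (number proposing to their $i^{\text{th}}$) minus (number proposing to their $(i+1)^{\text{th}}$), i.e. $(y_i - y_{i+1})n \pm o(n)$, which is the claimed statement. So the real content is a concentration argument linking the combinatorial quantities $f_i^{D,D',M,L}(\vec{x})$ to a genuine execution.

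The key steps, in order: \textbf{(1)} Run student-proposing deferred-acceptance in the natural ``rounds'' formulation: in round $1$ every student proposes to their top choice; inductively, in round $i$ every student who has been rejected from all of their top $i-1$ choices proposes to their $i^{\text{th}}$ choice, and each school retains its $L$ favorite applications among all it has seen so far. Let $Y_i n$ be the (random) number of students proposing in round $i$, so $Y_1 = 1$ always. \textbf{(2)} Argue that conditioned on $Y_i = y_i \pm o(1)$, the round-$i$ proposals behave (up to $o(n)$ and up to the $o(1)$-fraction error already noted in the footnotes, since $K$ is constant) like $y_i n$ proposals thrown to uniformly random schools, with each type-$1$ proposal carrying a fresh $D$-signal and each type-$>1$ proposal a fresh $D'$-signal. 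This is the place where independence/uniformity of student preferences is used, together with the observation that which students have been rejected so far is (asymptotically) independent of the identities of their remaining favorite schools. \textbf{(3)} Apply a concentration inequality (Azuma/McDiarmid on the process, or a balls-in-bins / Poissonization argument) to conclude that the number of round-$i$ proposals that survive to the end is $f_i^{D,D',M,L}(\vec Y)\, n \pm o(n)$ with probability $1-o(1)$; hence $Y_{i+1} = Y_i - f_i^{D,D',M,L}(\vec Y) \pm o(1)$. \textbf{(4)} Since $K$ is constant, union-bound over all $i \in [K]$: with probability $1-o(1)$, the random vector $\vec Y$ satisfies the same recursion~\eqref{eq:main} that $\vec y$ does, with $Y_1 = y_1 = 1$. \textbf{(5)} Conclude: the realized matching has $(Y_i - Y_{i+1})n \pm o(n)$ students matched to their $i^{\text{th}}$ choice, and $\vec Y$ satisfies~\eqref{eq:main}; if we additionally invoke the (to-be-proven) uniqueness, $\vec Y = \vec y \pm o(1)$, giving the theorem as stated. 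Actually, re-reading the theorem statement, it only claims \emph{existence} of a stable matching with that profile — so strictly we need steps (1)--(4) plus the remark that the output of student-proposing deferred-acceptance is a stable matching (Gale--Shapley). The uniqueness of $\vec y$ is not needed for this particular theorem; rather, this theorem is what \emph{feeds into} the uniqueness argument (via Theorem~\ref{thm:equiv}).

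\textbf{Main obstacle.} The delicate step is (2)--(3): justifying that at every round the surviving-vs-rejected status of a student is sufficiently independent of where her \emph{next} proposal lands, so that round-$i$ proposals really do look like fresh uniform throws with fresh signals. A student rejected from her top $i-1$ choices has conditioning on those $i-1$ schools (they were ``full of better applicants''), but since each school has capacity $L$ and sees $\Theta(n)$ applications, this conditioning perturbs the distribution of any individual school's signal-threshold by only $o(1)$, and the $i^{\text{th}}$ favorite is still uniform among the remaining $Mn - (i-1)$ schools. Making this precise — ideally by the same principal-ideal / exploration-process coupling used in~\cite{ImmorlicaM15,KojimaP09}, which we have already imported to prove Theorem~\ref{thm:equiv} — is the technical heart, and the rest (defining $f_i$, the balls-in-bins concentration, the union bound over constantly many rounds) is routine. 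We therefore defer the full formalization to Appendix~\ref{app:results}, reusing the machinery behind Theorem~\ref{thm:equiv}.
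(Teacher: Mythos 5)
There is a genuine gap here, and it is a logical one rather than a technical one. Your steps (1)--(4) show that the \emph{realized} proposal profile $\vec Y$ of a run of student-proposing deferred acceptance satisfies the system~\eqref{eq:main} with probability $1-o(1)$ --- but that is essentially the content of Theorem~\ref{thm:main} (via Lemmas~\ref{lem:accepted} and~\ref{lem:accepted2}), which is already proved. Theorem~\ref{thm:unique} asks for something different: given an \emph{arbitrary} solution $\vec y$ of the system, exhibit a stable matching realizing \emph{that particular} profile. If the system had two solutions $\vec x \neq \vec y$, your argument only tells you that $\vec Y$ agrees with one of them; it cannot tell you which, so it does not prove the statement for the given $\vec y$. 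You notice this and suggest invoking uniqueness to conclude $\vec Y = \vec y \pm o(1)$, but that is circular: uniqueness (Corollary~\ref{cor:main}) is \emph{derived from} Theorem~\ref{thm:unique} together with Theorem~\ref{thm:equiv}, precisely by showing that every solution is realized by a stable matching and that all stable matchings essentially coincide. Your closing remark that ``uniqueness is not needed for this particular theorem'' does not rescue the argument --- without it, steps (1)--(4) prove existence of a stable matching matching \emph{some} solution, not the one named in the hypothesis.

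The paper's proof avoids this by \emph{planting} the candidate solution rather than analyzing the natural run of the algorithm: using the principle of deferred decisions, it first draws $y_i n - o(n)$ ``type $i$ proposals'' to random universities with the appropriate $D$/$D'$ signals, labels them accepted or rejected, and only afterwards assigns proposals to students and completes the remaining preferences (Proposition~\ref{prop:sampling} checks this is a faithful sampling of the model). Because $\vec y$ solves~\eqref{eq:main}, the matching induced by the ``accepted'' labels is stable except for $o(n)$ \emph{inconsistent} students (Proposition~\ref{prop:matching}), and these are repaired by rejection chains, each of expected constant length $\le 2e^{K/M}$ (Lemmas~\ref{lem:enter} and~\ref{lem:rejection}), so only $o(n)$ students move. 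This construction is what lets one realize an arbitrary solution $\vec y$; some such device is unavoidable, and it is the missing idea in your proposal. Your concerns in the ``main obstacle'' paragraph about conditioning in successive rounds are legitimate but address the wrong theorem.
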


The proof will again consist of a few steps. At a high level, our approach will be roughly to consider a student-proposing deferred acceptance, but starting with a bunch of proposals already having been made (namely, $y_i n - o(n)$ type $i$ proposals for all $i$), and show that not much additional proposing is necessary to yield a stable matching. Due to space constraints, we defer the proof to Appendix~\ref{app:unique}.

\begin{corollary}\label{cor:main}
In our model, the system of equations~\eqref{eq:main} has a unique solution in $[0,1]^K$. 
\end{corollary}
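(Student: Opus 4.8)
The plan is to derive Corollary~\ref{cor:main} by combining Theorem~\ref{thm:main}, Theorem~\ref{thm:unique}, and Theorem~\ref{thm:equiv}. First I would observe that Theorem~\ref{thm:main} guarantees \emph{existence}: the vector $\vec{y}$ recording the expected number of type-$i$ proposals in the student-optimal stable matching satisfies the system~\eqref{eq:main}, and it lies in $[0,1]^K$ since each $y_i$ is a fraction of $n$ and the $y_i$ are non-increasing. So the solution set is nonempty; it remains only to rule out two distinct solutions.

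For uniqueness, suppose toward a contradiction that $\vec{y}$ and $\vec{y}'$ are both solutions to~\eqref{eq:main} in $[0,1]^K$ with $\vec{y} \neq \vec{y}'$. By Theorem~\ref{thm:unique}, with probability $1-o(1)$ there is a stable matching $\mathcal{M}$ in which $(y_i - y_{i+1})n \pm o(n)$ students get their $i^{th}$ favorite school for each $i \in [K-1]$, and likewise a stable matching $\mathcal{M}'$ with the analogous counts given by $\vec{y}'$. Since $\vec{y} \neq \vec{y}'$, there is some index $i$ with $|y_i - y_i'| = \Omega(1)$; I would then argue that the induced ``profile vectors'' $\big((y_i - y_{i+1})\big)_i$ and $\big((y_i' - y_{i+1}')\big)_i$ differ in at least one coordinate by $\Omega(1)$ as well (if all the successive-difference coordinates agreed up to $o(1)$, then, telescoping from $y_0 = y_0' = 1$, every $y_i$ would agree with $y_i'$ up to $o(1)$, contradicting $|y_i - y_i'| = \Omega(1)$). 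Hence $\mathcal{M}$ and $\mathcal{M}'$ assign a $\Theta(n)$ fraction of students to schools at \emph{different} ranks on their preference lists, so these two stable matchings must disagree on a $\Theta(n)$ fraction of students. Both events hold simultaneously with probability $1-o(1)$, so with probability $1-o(1)$ there exist two stable matchings differing on a $\Theta(n)$ fraction of students.

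This directly contradicts Theorem~\ref{thm:equiv}, which asserts that with probability $1-o(1)$ any two stable matchings agree on a $1-o(1)$ fraction of the students (in particular they cannot differ on a constant fraction). Therefore no two distinct solutions in $[0,1]^K$ can exist, and the solution is unique.

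The step I expect to require the most care is the translation from ``$\vec{y} \neq \vec{y}'$ as vectors in $[0,1]^K$'' to ``the corresponding stable matchings differ on $\Theta(n)$ students.'' One has to be a little careful because Theorem~\ref{thm:unique} only controls the \emph{marginal} counts of students at each rank, not the identities of the matched pairs; but this is enough, since if $\mathcal{M}$ matches $c \cdot n \pm o(n)$ students to their $i^{th}$ choice and $\mathcal{M}'$ matches $c' \cdot n \pm o(n)$ students to their $i^{th}$ choice with $|c - c'| = \Omega(1)$, then at least $|c-c'|\cdot n - o(n) = \Omega(n)$ students must be matched to their $i^{th}$ choice in exactly one of the two matchings, and hence receive different partners. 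Making the telescoping argument clean (to locate an index where the successive differences, not just the partial sums, disagree by $\Omega(1)$) and correctly handling the $o(1)$ slack throughout is the only real subtlety; everything else is a direct appeal to the three cited theorems.
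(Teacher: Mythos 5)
Your proof is correct and takes essentially the same route as the paper: assume two distinct solutions, invoke Theorem~\ref{thm:unique} to produce two stable matchings with rank profiles differing by $\Theta(n)$, and contradict Theorem~\ref{thm:equiv}; the telescoping from $y_1 = y_1' = 1$ is exactly the paper's closing step. You are slightly more thorough in also spelling out existence (via Theorem~\ref{thm:main}) and in justifying that differing marginal rank counts force the matchings themselves to disagree on $\Theta(n)$ students, both of which the paper leaves implicit.
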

\begin{proof}
Assume for contradiction that there are multiple solutions $\vec{x}$ and $\vec{y}$. By Theorem~\ref{thm:unique}, a $1-o(1)$ fraction of instances both have a stable matching where $(y_{i+1}-y_i)n \pm o(n)$ students are matched to their $i^{th}$ favorite school for all $i \in [K-1]$, and where $(x_{i+1}-x_i)n \pm o(n)$ students are matched to their $i^{th}$ favorite school for all $i \in [K-1]$. In order to not contradict Theorem~\ref{thm:equiv}, it must be that $x_{i+1}-x_i = y_{i+1}-y_i$ for all $i \in [K-1]$. As $x_1 = y_1 = 1$, this immediately implies that $x_i = y_i$ for all $i$.
\end{proof}

%!TEX root =  MAIN.tex

\section{Utilities of matchings} \label{sec:applications}

In this section, we instantiate our results from the previous section for a couple choices of $D, D'$. In the first, we consider $D = D'$ (so the schools' preferences are also uniformly at random). In this model, if everyone were to apply everywhere, synergy would be completely hidden. So it is natural to expect that smaller $K$ yields higher expected utility for the schools (because synergy can only be detected by where students choose to apply). We also consider $D=\mathcal{N}(\delta, 1)$ and $D' = \mathcal{N}(0,1)$ for some $\delta > 0$. Here, if everyone were to apply everywhere, some synergy would be detected, but not perfectly. Here we see that depending on $\delta$, the optimal choice of $K$ could range from $1$ to $n$.

\subsection{Example: Synergy is completely hidden}
For this example, consider the case where $D = D'$. Then university preferences are also uniformly at random over the students. So every proposal is equally likely to be accepted, and the probability that a given proposal is accepted is just $\frac{\text{\# proposals accepted}}{\sum_i x_i}$. When $L = 1$, the number of proposals accepted is just the number of schools that receive a proposal, which is $Mn(1-e^{-\sum_i x_i/M})$ in expectation. When $L > 1$ the closed form for the expected number of accepted proposals can be written down explicitly but is messier, so we just denote it as $g(\sum_i x_i n, M, L)$. Theorem~\ref{thm:main} implies that if $y_i$ denotes the number of type $i$ proposals made, then we have:
$$y_j = y_{j-1} - \frac{g(\sum_i y_i n, M, L)}{\sum_i y_in}\cdot y_{j-1} \ \forall j \in [K].$$

Note that $\frac{g(\sum_i y_in, M, L)}{\sum_i y_i n}$ doesn't depend on $j$. So we can define $\alpha = 1-\frac{g(\sum_i y_i n, M, L)}{\sum_i y_i n}$, and recover that $y_j = \alpha^{j-1}$ for all $j$. So we just need to solve for $\alpha$. Note that we have two equations involving $\alpha$ and $\sum_i y_i$. The first is simply the definition of $\alpha$ we just computed. The second is that we know that for whatever $\alpha$ we wind up with, we must have $\sum_{j=1}^K \alpha^{j-1} = \sum_i y_i$ (as both sides compute the total number of applications made). Rewriting and substituting for $\alpha$, this implies:

\begin{equation*}
\sum_i y_i = \frac{1-(1-\frac{g(\sum_i y_i n, M, L)}{\sum_i y_i n})^K}{\frac{g(\sum_i y_i n, M, L)}{\sum_i y_i n}}.
\end{equation*}
\begin{equation}\label{eq:cool}
\Rightarrow g(\sum_i y_i n, M, L)/n = 1-(1-\frac{g(\sum_i y_i n, M, L)}{\sum_i y_i n})^K.
\end{equation}

It is easy to see that as $\sum_i y_i$ increases, the LHS of Equation~\eqref{eq:cool} increases (as more applications are made, the expected number of accepted applications increases). Similarly, it is easy to see that as $\sum_i y_i$ increases, the RHS of Equation~\eqref{eq:cool} decreases. As $\sum_i y_i$ increases, the fraction of applications that are rejected ($1-\frac{g(\sum_i y_i n,M,L)}{\sum_i y_i n}$) increases. So $(1-\frac{g(\sum_i y_i n,M,L)}{\sum_i y_i n})$ increases, and the RHS of Equation~\eqref{eq:cool} decreases. So if there is a solution it is unique (actually, we already know this by Corollary~\ref{cor:main}, but this is a good sanity check). We can also confirm that a solution exists by observing that as $\sum_i y_i \rightarrow \infty$, the LHS approaches $ML$, whereas the RHS approaches $0$. As $\sum_i y_i \rightarrow 0$, the LHS approaches $0$, whereas the RHS approaches $1$. So somewhere in between, a unique solution exists (again, we also knew this already from Corollary~\ref{cor:main}). In any case, we conclude the following:

\begin{proposition}\label{prop:iid}
When $D = D'$, the expected number of total applications sent in the student-proposing deferred-acceptance algorithm, $x$, is within $\pm o(n)$ of the unique solution to $g(x,M,L)/n = 1 - (1-g(x,M,L)/x)^K$. A $g(x, M, L)/x$ fraction of students get their first choice. 
\end{proposition}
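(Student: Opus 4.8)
The plan is to derive Proposition~\ref{prop:iid} as an essentially computational consequence of Theorem~\ref{thm:main} and Corollary~\ref{cor:main}: specialize the system~\eqref{eq:main} to $D=D'$, reduce it to the single scalar equation~\eqref{eq:cool}, and then verify that this scalar equation has a unique solution (with the uniqueness also already guaranteed by Corollary~\ref{cor:main}). Throughout, $\vec{y}$ denotes the solution of~\eqref{eq:main} and $x := \sum_j y_j n$ the induced total number of applications.

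The first step is to show that when $D=D'$ the function $f_{i-1}^{D,D',M,L}$ factors as a product. Because $D=D'$, every university, having received some multiset of proposals, sees an i.i.d.\ signal for each of them regardless of its type, hence retains a uniformly random subset of size $\min(\#\text{received},L)$; so all $x=\sum_j y_j n$ proposals are accepted with the same probability. The expected total number accepted is by definition the expected number of filled slots in a balls-into-bins process with $x$ balls and $Mn$ bins of capacity $L$, namely $g(x,M,L)$ (for $L=1$ this is $Mn(1-e^{-\sum_j y_j/M})$; for $L>1$ it has a messier but explicit closed form). Therefore the common per-proposal acceptance probability is $g(x,M,L)/x$, and since $y_{i-1}n$ of the proposals are of type $i-1$,
\[ f_{i-1}^{D,D',M,L}(\vec{y}) \;=\; y_{i-1}\cdot \frac{g(x,M,L)}{x}. \]
Plugging this into~\eqref{eq:main} and setting $\alpha := 1 - g(x,M,L)/x$ gives $y_j = \alpha\, y_{j-1}\pm o(1)$; iterating from $y_1=1$ (legitimate since $K$ is constant, so the $O(K)$ accumulated $o(1)$ errors are still $o(1)$) yields $y_j = \alpha^{j-1}\pm o(1)$.

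Next I would close the loop on $\alpha$. Summing the geometric series, $\sum_{j=1}^K y_j = (1-\alpha^K)/(1-\alpha)\pm o(1)$; multiplying through by $1-\alpha = g(x,M,L)/x$ and using $x = \sum_j y_j n$ rearranges to exactly~\eqref{eq:cool}, i.e.\ $x$ solves $g(x,M,L)/n = 1-(1-g(x,M,L)/x)^K$ up to $\pm o(n)$. For uniqueness of this scalar equation, I would observe that $g(x,M,L)$ is nondecreasing in $x$ while $g(x,M,L)/x$ is nonincreasing in $x$ (once bins begin to fill, each additional ball is accepted with probability at most that of any earlier one), so the left-hand side is nondecreasing and the right-hand side nonincreasing in $x$; checking the extremes (as $x\to\infty$, LHS $\to ML$ and RHS $\to 0$; as $x\to 0^+$, LHS $\to 0$ and RHS $\to 1$) the intermediate value theorem gives exactly one crossing. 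Corollary~\ref{cor:main} then certifies that this root is genuinely the $\pm o(n)$-approximation to the expected number of applications sent (and not merely one root among several). Finally, the students who get their first choice are exactly those whose unique type-$1$ proposal is accepted; there are $y_1 n = n$ such proposals, each accepted with probability $g(x,M,L)/x$, so a $g(x,M,L)/x$ fraction of students get their first choice.

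The step that needs the most care is the factorization $f_{i-1}^{D,D',M,L}(\vec{y}) = y_{i-1}\cdot g(x,M,L)/x$. This rests on (i) a symmetry claim — a type-$i$ proposal is no likelier than any other to land at a ``crowded'' university — which holds because in our model all $K$ proposals of a student go to uniformly random universities up to the $o(1)$ ``no-repeat'' correction flagged after Theorem~\ref{thm:main}; and (ii) a concentration claim — that the number of proposals each university receives, and hence $g$, concentrates around its mean so that replacing random counts by expectations costs only $o(n)$ — which is the standard Chernoff/balls-into-bins argument already underlying Theorems~\ref{thm:equiv} and~\ref{thm:main}. The remaining manipulations (solving the geometric recursion, the monotonicity checks) are routine.
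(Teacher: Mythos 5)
Your proposal is correct and follows essentially the same route as the paper: specialize Theorem~\ref{thm:main} to $D=D'$ so that every proposal is accepted with the common probability $g(\sum_i y_i n, M,L)/(\sum_i y_i n)$, observe the recursion becomes geometric with ratio $\alpha = 1 - g/x$, sum the series to obtain Equation~\eqref{eq:cool}, and conclude uniqueness from the monotonicity of the two sides (with existence from the boundary limits and the back-up appeal to Corollary~\ref{cor:main}). The extra care you take with the factorization of $f_{i-1}^{D,D',M,L}$ and the concentration step is consistent with, and slightly more explicit than, the paper's own presentation.
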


Note that the information contained in Proposition~\ref{prop:iid} is indeed enough to compute the expected preference ranking of a university for its matched students. Also note that Equation~\eqref{eq:cool} helps quantify some intuition. Clearly, as $K$ increases, the total number of applications made in the student-proposing deferred acceptance will increase. We can readily read off from Equation~\eqref{eq:cool} that as $K$ increases, the RHS also increases, and therefore the total number of applications must also increase in order to re-reach equality. 

\subsection{Numerical results}
We performed numerical experiments to compute the number of students who get their first choice college as well as the total number of students that are matched with a college. We considered an instantiation of our general model  when there are 100 students and 100 colleges and each college has one slot. We analyze the effect as the students apply to an increasing number of colleges. The preferences of the students and of the colleges are sampled as described above. In particular, we think of the top-choice college of each student as being the ``right'' fit, so that both the student and the college receive increased utility when a student gets into her favorite college. 

\begin{figure}
  \centering
  \includegraphics[scale=0.6]{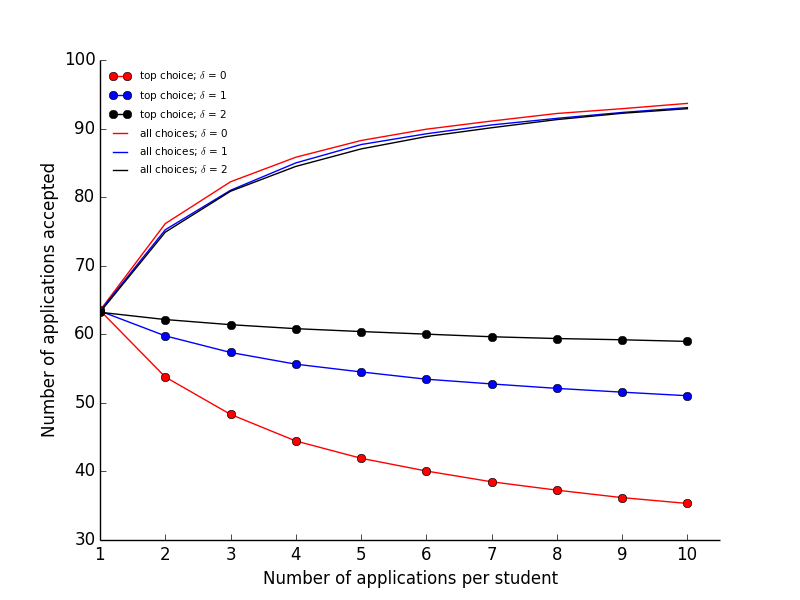}
  \caption{
As students make more applications, the number of students who get into their top-choice school decreases while the number of overall acceptances increases.}
  \label{fig:expt}
\end{figure}

The $x$-axis of Figure~\ref{fig:expt} indicates the number of applications made by each student. The bottom dotted curves correspond to the number of students that get into their top-choice school while the top curves correspond to the total number of students that get into a school. The different colors indicate different values of $\delta$. Recall that when $\delta = 0$, the applications are noisy and a college cannot distinguish between good-fit students and all other students. In this case, as the number applications made increases, a larger number of students are matched with a school, while a substantially smaller number of students get into their top choice. This confirms the theoretical analysis from the previous sections based on analyzing the fixed point equations. As $\delta$ increases, the ability of schools to identify the right students improves and we see less attribution of the top-choice applications. 

Depending on the ratio between the bonus utility from synergy versus the base utility from preferences, the optimal choice of $K$ will vary. As an example, if the base utility is $1$ for any match, and the bonus utility from synergy is also $1$, then for $\delta = 0$ in our experiments the utility is maximized when each student applies to 3 schools. 

%The total utilities for the students and schools from the matching is $N_t+\lambda N_1$, where $N_t$ is the number of students that get into a school, $N_1$ is the number of students that get into their top school and $\lambda > 0 $ quantifies the synergy from a student getting into the right school. Our experiments illustrate, for a given $\lambda$, what is the optimal number of applications per student to achieve the highest total utility. As an example, for $\lambda = 1$ and $\delta = 0$ in our experiments the utility is maximized when each student applies to 3 schools. The optimal number of applications may be implemented, for example, by adjusting the opportunity cost of making an application.  

\iffalse
\subsection{Where Synergy is Completely Hidden}

\subsection{Where Synergy is Completely Visible}

\subsection{Where Synergy is Partly Hidden}

\fi
%!tex.root = main.tex
\section{Discussion and Future Work}
We study the signal to noise tradeoff in matchings. We provide a clean and simple model for matching markets where one side has more information about match quality than the other, but there is no established signaling procedure for them to convey this information outside of the applications themselves. We show that our model is already rich enough to exhibit a tradeoff for both sides as the number of applications grows, without needing to include additional modeling features. Still, the parameters we choose to investigate in this work are exactly those for which a clean proof reveals the appropriate insight. It is an important direction for future work to investigate our model in regimes where these simple proofs no longer work. As an example, modeling synergy between students and schools besides their favorite would require tools beyond those developed in~\cite{ImmorlicaM15} to connect the student- and school-optimal stable matchings. As another example, modeling simultaneously the inefficiency caused by poor signaling \emph{and} application/processing cost would require additional reasoning about equilibria regarding how students choose which applications to send, and how schools decide which applications to process.

\section{Acknowledgements}
The authors are grateful to Peng Shi and Yannai Gonczarowski for guidance and helpful discussions during this work.

\bibliographystyle{alpha}
\bibliography{MasterBib} 

%!TEX root =  MAIN.tex

\appendix

\section{Omitted proof of Theorem~2} \label{sec:equiv_proof}

We first recall a characterization of stable partners. 

\begin{theorem}\label{thm:Roth}[\cite{Roth84,McVitieW71}]
$\bot \in U(s) \Leftrightarrow \{\bot\} = U(s)$. That is, if a student is ever unmatched in a stable matching, it is unmatched in all stable matchings. Similarly, if $|S(u)| \leq L$, $u$ is matched to all of $S(u)$ in every stable matching. If $u$ is matched to $S$ with $|S|<L$ in some stable matching, then $S = S(u)$. 
\end{theorem}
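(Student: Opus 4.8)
I would derive all three assertions from a single invariance statement: \emph{in every stable matching $\mathcal{M}$, each university $u$ is matched to exactly $\min(|S(u)|,L)$ students, and the set of matched students equals that of the university-optimal stable matching} $\mathcal{M}_U$ (the output of university-proposing deferred acceptance). The point is that Theorem~\ref{thm:GS} pins $\mathcal{M}_U$ down completely: $\mathcal{M}_U(u)$ is exactly the $\min(|S(u)|,L)$ favorite members of $S(u)$, so in particular $|\mathcal{M}_U(u)| = \min(|S(u)|,L)$. So once the invariance is in hand, $(a)$, $(b)$, $(c)$ drop out by pigeonhole. Note this route avoids the usual alternating-walk / ``lone wolf'' argument entirely, because Theorem~\ref{thm:GS} does the structural work for us.

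To prove the invariance, fix an arbitrary stable matching $\mathcal{M}$ and write $p(\mathcal{M}) = \sum_u |\mathcal{M}(u)|$ for its number of matched students. On one hand, $\mathcal{M}(u) \subseteq S(u)$ by definition of $S(u)$, and $|\mathcal{M}(u)| \le L$ by the capacity constraint, so $|\mathcal{M}(u)| \le \min(|S(u)|,L) = |\mathcal{M}_U(u)|$ for every $u$, whence $p(\mathcal{M}) \le p(\mathcal{M}_U)$. On the other hand, I claim every student matched in $\mathcal{M}_U$ is matched in $\mathcal{M}$: suppose $\mathcal{M}_U(s) = u$ but $\mathcal{M}(s) = \bot$. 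Since $\mathcal{M}_U$ is a deferred-acceptance output it only pairs mutually acceptable agents, so $u \succ_s \bot$ and $s \succ_u \bot$. Then $(s,u)$ is not a blocking pair of $\mathcal{M}$ even though $s$ strictly prefers $u$ to its partner $\bot$; hence $u$ must be full in $\mathcal{M}$ (otherwise it would prefer $s$ to one of its empty $\bot$-slots) and all $L$ of its $\mathcal{M}$-partners are ranked above $s$. But those $L$ students lie in $S(u)$, so $s$ is not among $u$'s top $\min(|S(u)|,L)$ members of $S(u)$, contradicting $s \in \mathcal{M}_U(u)$ via Theorem~\ref{thm:GS}. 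This gives a set inclusion $\{\text{matched in }\mathcal{M}_U\} \subseteq \{\text{matched in }\mathcal{M}\}$, hence $p(\mathcal{M}_U) \le p(\mathcal{M})$. Combining the two bounds, $p(\mathcal{M}) = p(\mathcal{M}_U)$; the inclusion together with equal cardinality forces the matched-student sets to coincide, and equality of $\sum_u |\mathcal{M}(u)|$ and $\sum_u |\mathcal{M}_U(u)|$ together with the per-university inequalities $|\mathcal{M}(u)| \le |\mathcal{M}_U(u)|$ forces $|\mathcal{M}(u)| = \min(|S(u)|,L)$ for every $u$. I expect this blocking-pair bookkeeping (that non-blocking really forces $u$ to be \emph{full} with all partners above $s$, using acceptability and the convention that empty slots are $\bot$-partners) to be the only step needing care; the rest is counting.

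Finally I would read off the theorem. For $(a)$: $\bot \in U(s)$ means $s$ is unmatched in some stable matching, hence — since the matched-student set is the same in all stable matchings — unmatched in every stable matching, i.e. $U(s) = \{\bot\}$; the converse is immediate. For $(b)$: if $|S(u)| \le L$ then in any stable $\mathcal{M}$ we have $|\mathcal{M}(u)| = \min(|S(u)|,L) = |S(u)|$ and $\mathcal{M}(u) \subseteq S(u)$, so $\mathcal{M}(u) = S(u)$. For $(c)$: if $\mathcal{M}(u)$ has size $< L$ in some stable $\mathcal{M}$, then $\min(|S(u)|,L) = |\mathcal{M}(u)| < L$ forces $|S(u)| < L$, so $\min(|S(u)|,L) = |S(u)|$ and thus $|S(u)| = |\mathcal{M}(u)|$; combined with $\mathcal{M}(u) \subseteq S(u)$ this yields $\mathcal{M}(u) = S(u)$.
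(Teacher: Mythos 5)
Your proof is correct, but there is nothing in the paper to compare it against: Theorem~\ref{thm:Roth} is imported as a classical result (the ``rural hospitals'' / lone-wolf theorem of McVitie--Wilson and Roth) and the paper gives no proof, only the citations. Your route is a legitimate and economical standalone argument. Instead of the usual alternating-path or lone-wolf machinery, you bootstrap everything off Theorem~\ref{thm:GS}: the bound $|\mathcal{M}(u)| \le \min(|S(u)|,L) = |\mathcal{M}_U(u)|$ comes for free from $\mathcal{M}(u) \subseteq S(u)$ and the capacity constraint, and the reverse count comes from the observation that a student matched to $u$ in $\mathcal{M}_U$ but unmatched in a stable $\mathcal{M}$ would force $u$ to be full in $\mathcal{M}$ with $L$ stable partners ranked above $s$, contradicting $s$'s membership in $u$'s top $L$ of $S(u)$; the double inclusion then yields both the invariance of the matched-student set and $|\mathcal{M}(u)| = \min(|S(u)|,L)$ for every stable $\mathcal{M}$, from which (a)--(c) follow by the bookkeeping you describe. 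Two points are worth flagging, though neither is a gap. First, your reading of Theorem~\ref{thm:GS} as giving $|\mathcal{M}_U(u)| = \min(|S(u)|,L)$ even when $|S(u)| < L$ (i.e., that $u$ is matched to \emph{all} of $S(u)$ in the university-optimal matching) is the intended and standard reading, but it is doing real work: it is essentially claim (b) specialized to $\mathcal{M}_U$, so your proof is best understood as propagating Theorem~\ref{thm:GS}'s guarantee from the university-optimal matching to all stable matchings rather than as a from-scratch proof. Second, the mutual-acceptability step ($u \succ_s \bot$ and $s \succ_u \bot$ for pairs in $\mathcal{M}_U$) is needed precisely to rule out $u$ having an empty $\bot$-slot in $\mathcal{M}$ under the paper's definition of blocking pair, and your justification via the mechanics of deferred acceptance is sound.
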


\textbf{Determining Multiple Stable Partners.} The following theorem is reworded from~\cite{KojimaP09}, builds on work of~\cite{ImmorlicaM15,Knuth76,KnuthMP90}, and provides an algorithm to determine whether or not $|S(u)|>L$ for a given university $u$. Intuitively, the algorithm starts by finding $u$'s least favorite stable partners via the student-proposing deferred-acceptance, then does an exhaustive search for how $u$ might possibly wind up with a student they like better in a stable matching. Step 3-b-iii) bears witness that such a student exists. Otherwise, $u$'s only stable partners are those matched from the student-proposing deferred-acceptance.\footnote{Theorem~\ref{thm:KojimaP} is essentially Lemma 3 of~\cite{KojimaP09}, which is identical in technical content, but stated in the language of whether matchings can be profitably manipulated.}
\begin{theorem}[\cite{KojimaP09}]\label{thm:KojimaP} The following algorithm takes as input any university $u$, and determines whether or not $|S(u)|>L$.
\end{theorem}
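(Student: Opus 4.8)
\noindent This theorem only repackages Lemma~3 of~\cite{KojimaP09} (which itself rests on~\cite{ImmorlicaM15,Knuth76,KnuthMP90}), so one option is simply to invoke it. The plan below is how I would instead verify its correctness from the structural facts already in hand, following the algorithm's steps as stated in~\cite{KojimaP09}. The backbone is the pairing of Theorem~\ref{thm:GS} with Theorem~\ref{thm:Roth}: running student-proposing deferred acceptance produces the (unique) student-optimal stable matching, which is university-pessimal, so $u$ is there matched to exactly $\min(L,|S(u)|)$ students, and these are $u$'s \emph{least} preferred $\min(L,|S(u)|)$ stable partners. Hence $|S(u)|>L$ holds if and only if (a) $u$ is full in the student-optimal matching --- write $s_0$ for $u$'s worst partner in it --- and (b) $u$ has a stable partner it strictly prefers to $s_0$. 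Condition (a) is exactly what the first steps of the algorithm test, so the whole burden is deciding (b), which the algorithm does by looping over every candidate ``improvement'' $s$ that $u$ prefers to $s_0$ and, for each, re-running deferred acceptance in the modified market in which $u$ commits to rejecting everyone it ranks below $s$; Step~3-b-iii) reports success when in that run $u$ ends up full with acceptable (i.e.\ $\succeq s$, hence $\succ s_0$) students.

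\noindent\textbf{Soundness.} I would argue that whenever Step~3-b-iii) fires, the matching $\mu'$ it produced is already stable in the original market. No pair avoiding $u$ can block $\mu'$, since $\mu'$ is the output of deferred acceptance on the modified instance; and the only pairs the modification removed involve $u$ together with a student it ranks below $s$, while every partner of $u$ in $\mu'$ is ranked at least as high as $s$, so any pair $(s',u)$ blocking $\mu'$ would have $s'$ acceptable to $u$ in the modified market and hence already block $\mu'$ there --- impossible. Thus $\mu'$ is a stable matching of the true market in which $u$ is matched to $L$ students, all distinct from $s_0$ (they beat it); together with the student-optimal matching (which already puts $s_0\in S(u)$) this exhibits $|S(u)|\ge L+1$.

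\noindent\textbf{Completeness.} Conversely, suppose $|S(u)|>L$. Using Theorem~\ref{thm:Roth} and the lattice structure of stable matchings I would first reduce to a stable matching $\nu$ of the original market \emph{all} of whose $u$-partners beat $s_0$; let $s$ be $u$'s worst partner in $\nu$. Then $\nu$ stays stable in the modified market for candidate $s$ (removing options from $u$ cannot create a blocking pair, and none of the removed pairs occurred in $\nu$), and in $\nu$ the university $u$ is full. By Theorem~\ref{thm:Roth} applied \emph{inside} the modified market, $u$ is then matched to the same number of students --- namely $L$ --- in every stable matching of that market, in particular in its student-optimal one, which is precisely the outcome of the deferred-acceptance run the algorithm performs; and every partner of $u$ there is acceptable, i.e.\ $\succeq s\succ s_0$. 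Hence Step~3-b-iii) fires. Combining the two directions gives the claimed equivalence, and the running time is immediate: there are at most $|S|$ candidates and each triggers one deferred-acceptance run.

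\noindent The step I expect to be the real obstacle is the reduction in the completeness argument to a single canonical $\nu$ all of whose $u$-partners beat $s_0$: this is where one genuinely needs the join/meet structure of the stable-matching lattice (or a hands-on exchange argument), and it is exactly the place where it is tempting --- and, given Theorem~\ref{thm:Roth} and the cited machinery, entirely legitimate --- to defer to~\cite{ImmorlicaM15,KojimaP09}. The soundness side and the poly-time bound, by contrast, are routine once one has the ``truncation cannot create blocking pairs'' observation and the rural-hospitals invariance of Theorem~\ref{thm:Roth}.
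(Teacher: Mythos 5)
The paper does not prove this statement at all: as the surrounding text and footnote say, it is imported verbatim as Lemma~3 of~\cite{KojimaP09} (building on~\cite{ImmorlicaM15,Knuth76,KnuthMP90}), so your first option --- simply invoking the citation --- is exactly what the paper does, and is the only ``proof'' to compare against.

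Your from-scratch verification, however, establishes the correctness of a \emph{different} algorithm from the one in the theorem statement. The stated algorithm loops over subsets $S\subseteq M(u)$ of $u$'s \emph{current} partners in the student-optimal matching, forces $u$ to reject each $s\in S$, and follows the resulting rejection chain, reporting \textsc{yes} only if some student whom $u$ prefers to all of $M(u)$ proposes to $u$ along the way. What you analyze is a truncate-and-rerun procedure: for each candidate \emph{improvement} $s\notin M(u)$, delete from $u$'s list everyone below $s$ and re-run deferred acceptance from scratch. Your soundness and completeness arguments for that procedure are essentially right --- and the step you flag as the ``real obstacle'' is easier than you fear, since the canonical $\nu$ can simply be taken to be the university-optimal stable matching, whose $u$-partners are $u$'s $L$ favorites in $S(u)$ by Theorem~\ref{thm:GS}; no lattice machinery is needed. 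But none of this engages with the actual content of Kojima--Pathak's lemma, namely that the \emph{rejection-chain} dynamics started from divorcing subsets of $M(u)$ exhaustively witness the existence of a better stable partner. That equivalence is the nontrivial part, and it matters here: Proposition~\ref{prop:partners} bounds the probability of reaching step 3-b-iii) precisely by tracking where these chains terminate (at unfilled universities), an analysis that would not transfer to your truncate-and-rerun variant. Two smaller points: your condition (b), ``$u$ has a stable partner it strictly prefers to $s_0$,'' is vacuously true whenever $L>1$ and $u$ is full (the other members of $M(u)$ already qualify); it should read ``a stable partner outside $M(u)$,'' equivalently one preferred to \emph{all} of $M(u)$, matching the test in step 3-b-iii). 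And the candidate set in your loop ranges over students, not over $M(u)$, which changes the running-time accounting as well.
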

\begin{enumerate}
\item Find the student-optimal stable matching. Let $M(u)$ denote the students matched to $u$. 
\item If $|M(u)| < L$, terminate and output \textsc{no} (because by Theorem~\ref{thm:Roth}, $M(u) = S(u)$). Otherwise:
\item For($S \subseteq M(u)$): For($s \in S$):
\begin{enumerate}
\item Unmatch $s$ and $u$, pretending that $u$ rejected $s$. 
\item While($s$ is unmatched and $s \neq \bot$):
\begin{enumerate}
\item $s$ proposes to their favorite university who hasn't yet rejected them, $u'$.
\item If $u' = \bot$: break, move onto the next $s \in S$ and start from 3a). 
\item If $u' = u$, and $u$ prefers $s$ to every student in $M(u)$, terminate and output \textsc{yes}.
\item If $u' \neq u$: update $s$ to be the student rejected by $u'$ (either the previous $s$, or their least favorite previous match, whichever is lower on $u'$'s preference list). 
\end{enumerate}
\end{enumerate}
\item If the algorithm completes the for-loops without outputting \textsc{yes}, output \textsc{no}. 
\end{enumerate}

\subsection{Number of Stable Partners}\label{sec:partners}
We show that as $n \rightarrow \infty$, a $1-o(1)$ fraction of universities have $|S(u)| \leq L$. This implies that the university-optimal stable matching and student-optimal stable matching are the same except for an $o(1)$ fraction of participants by Theorem~\ref{thm:Roth}. It turns out that it will be much simpler to analyze the student-optimal stable matching, so this is a useful observation. The proof is essentially identical to that of~\cite{ImmorlicaM15}, we just have to repeat it here for completeness since our model is slightly different.\footnote{Specifically: in their model it is assumed that university preferences are \emph{arbitrary}, but independent of student preferences. In our model, university preferences depend on student preferences, but not by much.} The proof is somewhat illustrative of the student-proposing deferred-acceptance process with random preferences. One takeaway from the proof is that mileage comes from the fact that a constant faction of universities will never get a proposal.

\begin{proposition}\label{prop:partners}
In our model, for constant $K, L, M$, and growing $n$, the expected number of universities with $|S(u)| > L$ is $o(n)$. 
\end{proposition}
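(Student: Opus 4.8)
The plan is to fix a single university $u$ and bound $\Pr[|S(u)|>L]$; since the universities are symmetric, linearity of expectation gives $\E[\#\{u : |S(u)|>L\}] = Mn\cdot\Pr[|S(u)|>L]$, so it suffices to prove $\Pr[|S(u)|>L]=o(1)$. To do this I would run the algorithm of Theorem~\ref{thm:KojimaP} on $u$: it first computes the student-optimal matching, and then iterates over the at most $2^{L}\cdot L=O(1)$ pairs $(S,s)$ with $S\subseteq M(u)$ and $s\in S$, each time unmatching $S$ from $u$ and tracing a single rejection chain; it outputs \textsc{yes} only if some such chain ever makes a proposal back to $u$. Hence it is enough to show that a single such rejection chain proposes to $u$ with probability $O(1/n)$, and then to union-bound over the $O(1)$ chains.

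The first ingredient is that, with probability $1-o(1)$, at least $\beta n$ universities receive \emph{no} proposal during student-proposing deferred-acceptance (for a constant $\beta=\beta(K,M)>0$), and therefore have an empty slot. Since every student proposes only to universities among their $K$ favorites, a university $v$ gets no proposal whenever it lies in no student's top-$K$ list, an event of probability $(1-K/(Mn))^{n}\to e^{-K/M}$ because student preferences are independent and uniform. Thus $\E[\#\{v : v \text{ in no student's top }K\}] = (1-o(1))e^{-K/M}Mn=\Omega(n)$, and since changing one student's list changes the value of at most $2K=O(1)$ of these indicators, the bounded-differences inequality concentrates the count around its mean, so at least $\beta n$ universities have an empty slot with probability $1-e^{-\Omega(n)}$. (This step does not mention university preferences at all, so the fact that in our model they depend weakly on student preferences is irrelevant.)

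The second ingredient is that rejection chains are short. At each step the current floating student proposes to their favorite university among those that have not yet rejected them; a floating student has been rejected by at most $K-1$ universities (once rejected by all $K$ it becomes $\bot$ and the chain ends), and by a deferred-decisions view its next proposal target is uniform over the remaining $\ge Mn-K$ universities, independent of the (weakly dependent) university preferences. If that target has an empty slot, the floating student is accepted there and the chain ends. The mild circularity that chains consume empty slots is handled by a stopping-time argument: as long as the chain has run for at most $\varepsilon n$ steps it has filled at most $\varepsilon n$ slots, so at least $\beta n-\varepsilon n\ge \beta n/2$ universities still have room and the chain ends at the next step with probability at least some constant $c>0$; combined with the trivial bound that a chain has at most $Kn$ steps, this gives $\E[\text{chain length}]=O(1)$. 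Since each step targets $u$ with conditional probability at most $1/(Mn-K)$, the expected number of proposals a chain makes to $u$ is at most $\E[\text{chain length}]/(Mn-K)=O(1/n)$, so by Markov the chain proposes to $u$ at all with probability $O(1/n)$. Union-bounding over the $O(1)$ starting pairs $(S,s)$ gives $\Pr[|S(u)|>L]=O(1/n)=o(1)$, and summing over the $Mn$ universities completes the proof.

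I expect the main obstacle to be making the chain analysis of the third paragraph fully rigorous: one must justify that the floating student's next proposal really is, conditional on the \emph{entire} history (including the event that the chain has not yet ended), uniform over a large set of universities — the deferred-decisions observation — and one must resolve the circularity between ``many universities have empty slots'' and ``chains are short'' via the stopping-time argument rather than treating the empty-slot supply as static. The remaining pieces (counting empty universities, the $O(1)$-way union bound, and passing from a single $u$ to the expectation) are routine. This argument is essentially that of~\cite{ImmorlicaM15}; only minor bookkeeping changes because our university preferences are a mild function of the students' first choices rather than arbitrary-but-independent, and as noted neither ingredient above is sensitive to that distinction.
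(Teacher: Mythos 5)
Your proposal is correct and follows essentially the same route as the paper: reduce to the algorithm of Theorem~\ref{thm:KojimaP}, use deferred decisions to argue that (since every student has already revealed their first choice, the only entry correlated with university preferences) each floating student's next proposal target is uniform over its undrawn list, and exploit the fact that a constant fraction of universities receive no proposal to terminate each of the $O(1)$ rejection chains quickly. The only cosmetic difference is the final estimate: you bound the expected chain length and multiply by the per-step probability $O(1/n)$ of hitting $u$, whereas the paper runs a direct race between ``propose to $u$'' and ``propose to an unfilled university,'' obtaining $\Pr[\text{return to } u] \le 2^L L/(X-L+1)$ and then bounding its expectation as in~\cite{KojimaP09}; both yield the required $o(1)$ per university.
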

\begin{proof}
We begin by applying the principle of deferred decisions to claim that the following is a valid way to draw student and university preferences in our model:
\begin{enumerate}
\item For each student independently, draw their favorite university uniformly at random, but not the rest of their preference list. 
\item For each university independently, draw their complete preferences.
\item For each student independently, complete their preferences, uniformly at random. 
\end{enumerate}

The point is that university preferences only depend slightly on student preferences: university $u$ is more likely to prefer special-$u$ students. But otherwise, university and student preferences are independent. In particular, university preferences are independent of student preferences with the exception of their favorite university. The rest of the proof continues exactly as in~\cite{ImmorlicaM15}. Pick a university $u$ and consider running the algorithm to determine whether or not $|S(u)|>L$, but using the principle of deferred decisions. Specifically, complete steps 1) and 2) in the above sampling, but not 3). Only draw the next university on a student's preference list when they are about to propose.

So first find the student-optimal stable matching, drawing all necessary student preferences in order to do so (and the complete university preferences). Certainly, each student will have proposed at least once at this point. Now, if $|M(u)| <L$, the algorithm terminates and correctly outputs \textsc{no}. Otherwise, the algorithm continues in step 3. In order to output \textsc{yes}, $s$ must propose to $u' = u$ (step 3-b-iii). However, if $s$ instead proposes to some university $u'$ who was unfilled in the student-optimal stable matching (or after the previous $s \in S$ were processed), then the algorithm returns to step 3-a) with a new $s$ having not yet output \textsc{yes} (because $u'$ will accept $s$'s proposal and the new $s$ will become $\bot$). If $s$ proposes to some university $u'\neq u$ who was matched in the student-optimal stable matching (of after the previous $s \in S$ were processed), then the algorithm continues with the same $s$, still having not yet output $\textsc{yes}$. Because each student has already proposed at least once to find the student-optimal stable matching, no matter which student $s$ is, or how many rejections they faced before being labeled $s$ in the algorithm, the university they will propose to next is currently undrawn, and is decided uniformly at random among all universities they haven't yet proposed to. So if $X$ denotes the number of universities who were unfilled in the student-optimal stable matching, then the probability that the algorithm ever reaches step 3-b-iii) (conditioned on $X$) is at most $\frac{2^LL}{X-L+1}$. This is because every time there is a chance of reaching step 3-b-iii), it is at least $X-L$ times as likely that $s$ proposes to an unfilled university (because up to an additional $L$ universities might become filled by the earlier $s \in S$) and the algorithm processes the next $s \in S$. So taking a union bound over all $s \in S$ that are processed, and over all $S$ that might be chosen results in $\frac{2^LL}{X-L+1}$. So now the analysis reduces to computing $\mathbb{E}[\frac{2^LL}{X-L+1}]$. 

Fortunately, this exact quantity is already bounded (Lemma 7/proof of Theorem 1) in~\cite{KojimaP09}. The idea is that for any university, the probability that it is unfilled in the student-optimal stable matching is at least the probability that it does not appear $L$ times in any student's top $K$ choices. There are a total of $nK$ universities among all students' top $K$ choices, all drawn independently with replacement. So the probability that a university doesn't appear $L$ times is at least the probability that a university doesn't appear at all in any student's top $K$ choices, which is exactly $(1-1/(nM))^{nK} \approx e^{-K/M}$, meaning that $\mathbb{E}[X] \approx nM/e^{-KM}$. Computing $\mathbb{E}[\frac{2^LL}{X-L+1}]$ is a touch more involved, but Lemma 7 (and the proof of Theorem 1) in~\cite{KojimaP09} show it to be $o(1)$ (intuitively, this should be true because $\mathbb{E}[X] = \Theta(n)$, and $L$ is constant). Combined with the analysis in the previous paragraph, this means that for a fixed $u$, the probability that $|S(u)| > L$ is at most $o(1)$. 
\end{proof}

\begin{proof}[Proof of Theorem~\ref{thm:equiv}]
Simply combine Proposition~\ref{prop:partners} and Theorem~\ref{thm:Roth}.
\end{proof}

\section{Omitted Proofs from Section~3}\label{app:results}
The proof of Theorem~\ref{thm:main} has a couple steps. We begin by showing that the number of type $i$ proposals accepted when $x_j n$ type $j$ proposals are made for all $j$ concentrates very tightly around $f_i^{D, D',M,L}(\vec{x})$.

\begin{lemma}\label{lem:accepted}
Let $x_j n$ type $j$ proposals to uniformly random universities be made for $j \in [K]$. Then with probability $1-o(1)$, for all $i$, the number of type $i$ proposals that are accepted is $f^{D, D',M,L}_i(\vec{x}) n \pm o(n)$. 
\end{lemma}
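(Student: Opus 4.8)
The plan is to prove this by a routine concentration argument (the method of bounded differences), observing that the content of the lemma is exactly that the relevant count is tightly concentrated around its mean, while the mean is $f^{D,D',M,L}_i(\vec{x})\cdot n$ \emph{by definition}. Concretely, the experiment is: there are $N := \sum_{j\in[K]} x_j n = \Theta(n)$ proposals, and each proposal $p$ carries an independent ``coordinate'' consisting of (i) its destination, a uniformly random one of the $Mn$ universities, (ii) its signal, an independent draw from $D$ if $p$ is a type-$1$ proposal and from $D'$ otherwise, and (iii) an independent tie-break token (irrelevant unless $D$ or $D'$ has atoms). Each university then sorts the proposals it received by signal and ``accepts'' its top $L$. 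Let $Z_i$ be the number of accepted type-$i$ proposals; it is a deterministic function of the $N$ independent coordinates just described, and $\E[Z_i] = f^{D,D',M,L}_i(\vec{x})\cdot n$ is precisely the definition of $f_i$.

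The key step is to check the bounded-differences (Lipschitz) property. Changing the coordinate of a single proposal $p$ — either re-drawing its signal or moving it to a different university — alters the sorted order of proposals inside at most two universities, and inside each affected university the top-$L$ set changes by at most one proposal leaving and at most one proposal entering; hence the acceptance status of at most four proposals (of any type) changes. So $Z_i$ satisfies the bounded-differences condition with constant $c = 4$, and McDiarmid's inequality (equivalently, Azuma--Hoeffding on the Doob martingale of $Z_i$) gives, for any fixed $\varepsilon > 0$,
\[
\Prob\!\left[\,\bigl|Z_i - f^{D,D',M,L}_i(\vec{x})\,n\bigr| \ge \varepsilon n\,\right] \;\le\; 2\exp\!\left(-\frac{\varepsilon^2 n^2}{8N}\right) \;\le\; 2\exp\!\left(-\frac{\varepsilon^2 n}{8K}\right),
\]
which is $o(1)$ because $K$ is a constant. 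Taking a union bound over the $K$ values of $i$ and then letting $\varepsilon = \varepsilon_n \to 0$ slowly (e.g.\ $\varepsilon_n = n^{-1/3}$) yields $Z_i = f^{D,D',M,L}_i(\vec{x})\,n \pm o(n)$ for all $i\in[K]$ simultaneously with probability $1-o(1)$, which is the claim.

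The only point needing a word of care is that when this lemma is applied inside the proof of Theorem~\ref{thm:main}, the proposals are not exactly uniform (a student's $j$-th proposal avoids her top $j-1$ choices), whereas the statement above is for uniformly random destinations. This is handled by coupling the with-replacement model (uniform destinations) to the without-replacement model (a student's $K$ proposals forced distinct): the expected number of students whose $K$ uniform picks are not all distinct is $O(K^2/M) = O(1)$, so with probability $1-o(1)$ only $o(n)$ proposals differ between the two models, and by the Lipschitz bound above resampling $o(n)$ proposals perturbs both $Z_i$ and its expectation by only $o(n)$; hence the conclusion transfers verbatim. I do not anticipate any genuine obstacle here — the entire content of the lemma is that ``number of accepted type-$i$ proposals'' has a constant Lipschitz constant as a function of the independent per-proposal choices, so the hardest part is simply the top-$L$-set bookkeeping for the bounded-differences constant.
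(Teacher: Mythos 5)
Your proof is correct, but it takes a genuinely different route from the paper's. The paper argues concentration via negative correlation: it lets $X_p$ be the acceptance indicator of proposal $p$, asserts that the $\{X_p\}$ are negatively correlated (one proposal being accepted can only make others less likely to be), and then invokes the generalized Chernoff--Hoeffding bound for negatively correlated variables of Panconesi--Srinivasan / Impagliazzo--Kabanets to get deviation $\pm\sqrt{n\log n}$ with probability $1-2/n$. You instead view the count $Z_i$ as a deterministic function of the $N=\Theta(n)$ independent per-proposal coordinates (destination, signal, tie-break), verify a bounded-differences constant (your top-$L$ bookkeeping giving $c=4$ is right: two universities affected, at most one in and one out of each top-$L$ set), and apply McDiarmid. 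Both arguments are standard and both comfortably give the $\pm o(n)$ deviation with probability $1-o(1)$; and in both the identity $\E[Z_i]=f_i^{D,D',M,L}(\vec{x})\,n$ holds by definition of $f_i$, as you note. What each buys: the paper's bound is quantitatively a bit sharper ($\sqrt{n\log n}$ versus your $n^{2/3}$-type deviation, though either suffices here), while your route sidesteps the negative-correlation claim entirely --- a claim the paper asserts on intuition without verification, and which for ``top-$L$ of a random multiset'' processes genuinely requires an argument --- replacing it with independence of the underlying coordinates plus an explicit Lipschitz check, which is fully verified in your write-up. Your closing remark about coupling the with-replacement and without-replacement proposal models is not needed for the lemma as stated (its hypothesis is uniformly random destinations), but it correctly anticipates the issue the paper relegates to a footnote when the lemma is applied in Theorem~\ref{thm:main}.
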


\begin{proof}
This is just a concentration bound. Let $X_p$ be the indicator random variable for the event that proposal $p$ is accepted. Then $\{X_p\}_{p\in P}$ are negatively correlated for all sets of proposals $P$ (because some proposals being accepted only make it less likely that others are). Therefore, the modified Chernoff-Hoeffding bound of~\cite{PanconesiS97,ImpagliazzoK10} applies to the random variable $\sum_{p|\text{ $p$ is type $i$}} X_p$. Therefore, this sum is within $\pm \sqrt{n\log n} = o(n)$ of its expectation with probability at least $1-2/n = 1-o(1)$. 
\end{proof}

\begin{lemma}\label{lem:accepted2}
Let $y_i \cdot n$ denote the expected number of students that propose to their $i^{th}$ favorite university in a run of the student-proposing deferred-acceptance. Then with probability $1-o(1)$, $y_i n \pm o(n)$ students propose to their $i^{th}$ favorite university in the student-optimal stable matching.
\end{lemma}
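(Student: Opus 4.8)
The plan is to show that the number of students proposing to their $i^{th}$ choice concentrates around its expectation by exhibiting the quantity as a sum (over students) of bounded, negatively associated indicator variables, and then invoking the same Chernoff--Hoeffding-type bound used in the proof of Lemma~\ref{lem:accepted}. Concretely, for each student $s$ let $Y_s^{(i)}$ be the indicator that $s$ proposes to her $i^{th}$ favorite university at some point during a run of the student-proposing deferred-acceptance algorithm; this is well-defined because, by Theorem~\ref{thm:GS}, the outcome of the algorithm (and the full set of proposals made) is independent of the order in which proposals are processed, so ``$s$ proposes to her $i^{th}$ choice'' is an event determined by the preference profile alone. Then $\sum_s Y_s^{(i)}$ is exactly the number of students who propose to their $i^{th}$ favorite university, and $y_i n$ is its expectation by definition. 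It remains to argue concentration.

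The key step is to establish that $\{Y_s^{(i)}\}_{s \in S}$ is negatively associated (or at least negatively correlated in the sense needed by the bound of~\cite{PanconesiS97,ImpagliazzoK10}). Intuitively, in a deferred-acceptance process, more students being rejected far down their lists only makes it harder — not easier — for any other given student to get rejected that far, since the ``competition'' a student faces at her top $i-1$ choices is (weakly) decreasing when other students are pushed further down their own lists. I would make this precise by running the student-proposing algorithm in a coupling where we monotonically track, for each university, the set of proposals it has received; a standard argument (as in the analyses of~\cite{ImmorlicaM15,KojimaP09}) shows that conditioning on a subset of students proposing far down their lists can only decrease the conditional probability that another student does. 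Alternatively — and this is the cleaner route — one can observe that $\sum_s Y_s^{(i)} = n - (\text{number of students matched to one of their top } i-1 \text{ choices in the student-optimal stable matching})$, and the latter count is a Lipschitz function of the preference profile: changing one student's entire preference list changes the stable matching on only $O(1)$ students in expectation... but actually this is false in the worst case, so I'd avoid bounded-differences over the whole profile and instead stick with the negative-association route applied directly to the $Y_s^{(i)}$.

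Given negative association, the modified Chernoff--Hoeffding bound gives that $\sum_s Y_s^{(i)}$ is within $\pm\sqrt{n \log n} = o(n)$ of $y_i n$ with probability $1 - O(1/n)$; a union bound over the $K = O(1)$ values of $i$ preserves the $1-o(1)$ success probability. This is exactly the statement of the lemma.

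The main obstacle is justifying negative association (or the weaker correlation condition) for the events $\{Y_s^{(i)}\}_s$. Unlike the acceptance indicators $X_p$ in Lemma~\ref{lem:accepted} — where it is essentially immediate that one proposal being accepted makes others less likely — here the events concern how far down their lists students propose, which is a more global property of the run. I expect the cleanest way through is to condition on the realization of all student \emph{first} choices and all university preference lists (as in the deferred-decisions setup of Proposition~\ref{prop:partners}), reveal the rest of the student lists only as needed, and then argue that the process defining the $Y_s^{(i)}$ is a ``monotone'' balls-into-bins-with-rejection process for which negative association of the relevant indicators is preserved under the operations used (independent choices, followed by a deterministic monotone aggregation). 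Making this coupling fully rigorous is the part that requires care; everything after it is a routine application of the tail bound already cited in the paper.
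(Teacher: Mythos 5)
Your proposal takes essentially the same route as the paper: the paper's proof also defines per-student indicators for ``proposes to her $i^{th}$ favorite,'' asserts that they are negatively correlated (with only the one-line intuition that some students being rejected a lot makes it more likely that other proposals were accepted), and applies the modified Chernoff--Hoeffding bound of~\cite{PanconesiS97,ImpagliazzoK10} to get concentration within $\pm\sqrt{n\log n}$ with probability $1-2/n$. The negative-association step you flag as the delicate part is indeed the crux, and the paper does not justify it any more rigorously than you do.
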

\begin{proof}
Again, this is just a concentration bound. Let $X_s$ be the indicator random variable for the event that student $s$ proposes to her $i^{th}$ favorite university. Then $\{X_s\}_{s \in S}$ are negatively correlated for all sets of students $S$ (because some students proposing a lot means they were rejected, making it more likely that other proposals were accepted, and therefore less likely that other students also propose to their $i^{th}$ favorite university). Therefore, the modified Chernoff-Hoeffding bound of~\cite{PanconesiS97,ImpagliazzoK10} again applies to the random variable $\sum_s X_s$, and the sum is within $\pm \sqrt{n\log n} = o(n)$ of its expectation with probability at least $1-2/n = 1-o(1)$. 
\end{proof}

\begin{proof}[Proof of Theorem~\ref{thm:main}]
Assume for contradiction that this was not the case. Lemma~\ref{lem:accepted2} shows that the number of students who propose to their $i^{th}$ favorite university is $y_i n \pm o(n)$ with probability $1-o(1)$. This means that the number of type $i$ proposals made is $y_i n \pm o(n)$ with probability $1-o(1)$. But Lemma~\ref{lem:accepted} then says that with probability $1-o(1)$, the number of accepted type $i$ proposals is $f_i^{D, D',M,L}(\vec{y})n \pm o(n)$. So $y_i$ better fall in the range $y_{i-1}n - f_{i-1}^{D, D', M, L}(\vec{y})n \pm o(n)$, or else we have a contradiction (that either too many or too few type $i$ proposals were made, given the number of rejected type $i-1$ proposals). 
\end{proof}

\section{Omitted Proofs from Section~3.1}\label{app:unique}We first consider drawing preferences in the following way:
\begin{enumerate}
\item For each $i$, draw $y_i \cdot n - o(n)$ random universities. These will be the type $i$ proposals made.
\item For each university $u$, draw a signal from $D$ for each type 1 proposal made to it, and a signal from $D'$ for each type $> 1$  proposal made to it. Select the $L$ proposals with highest signals. Label each proposal as ``accepted'' if it is one of the $L$ highest, or ``rejected'' otherwise.
\item Randomly assign all type $1$ proposals to the students. 
\item Repeat for $i = 2$ to $K$: Randomly assign the type $i$ proposals to students whose type $i-1$ proposal was rejected. Note that there might be fewer proposals than students. Leave the extra students without a type $i$ proposal.\footnote{Recall again that for a $\Theta(1/n)$ fraction of proposals, there will be an issue where a student's type $i$ proposal goes to the same school as their type $j$ proposal for $j \neq i$. This can again be safely ignored because all of our claims concern a $1-o(1)$ fraction of students, and again we will not address this formally.}
\item For all student preferences that have not yet been formed, drawn them uniformly at random.
\item To fill out the university preferences, if a signal for a proposal was already drawn, map that signal to the student who was assigned that proposal. Otherwise, draw a signal independently from $D'$ for all other proposals (which are necessarily type $> 1$), and sort students in decreasing order of signal. 
\end{enumerate}

Before appealing to the labels of ``accepted'' and ``rejected'' to construct a matching, let's first confirm that the above preferences are drawn faithfully w.r.t. our model.

\begin{proposition}\label{prop:sampling}
The sampling procedure above faithfully draws preferences in line with our model. 
\end{proposition}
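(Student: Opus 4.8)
The plan is to verify Proposition~\ref{prop:sampling} by checking that the joint distribution over (student preferences, university preferences) induced by the six-step procedure is exactly the one in our model, and this is most cleanly done by a hybrid/coupling argument that peels off the steps one at a time and appeals to the principle of deferred decisions. Concretely, I would argue that the procedure is a valid instance of the same deferred-decision sampling used in the proof of Proposition~\ref{prop:partners}: draw each student's favorite university uniformly at random first (this is what Step~1, together with the later assignments in Steps~3--4, is implicitly doing for the type~1 proposals), then draw all university preferences, then complete the student preferences uniformly at random (Step~5). The only subtlety relative to that earlier argument is that here the ``order'' in which students reveal later entries of their preference list is being dictated by the synthetic labels, not by an actual run of deferred acceptance, so I need to check this reordering does not bias anything.

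First I would establish that Steps~1, 3, and~4 jointly produce, for each student, a uniformly random ordered list of distinct universities for her top (up to) $K$ choices, independent across students, up to the $o(1)$ fraction of students affected by collisions (a student's type~$i$ proposal landing on a university she already ``proposed'' to), which we are explicitly allowed to ignore. The key point is that a uniformly random prefix of length $K$ of a uniform permutation can be generated by: pick the first element uniformly, then among those students rejected at stage~1 pick their second element uniformly from the remaining universities, and so on. Since the rejection labels in Step~2 are drawn from the signals \emph{before} any student identity is attached (the universities only see ``a type~1 slot'' versus ``a type~$>1$ slot''), the set of students whose type~$i$ proposal is rejected is, conditioned on the labels, a uniformly random subset of the appropriate size of the students still active, and the random assignment in Step~4 then gives each such student a uniform unused university as her $(i{+}1)$st choice. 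Thus the marginal on student preferences is correct.

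Next I would check the conditional distribution of university preferences given student preferences. By construction each university $u$ sees some number of type~1 proposals and some number of type~$>1$ proposals; in our model, $u$'s signal for a special-$u$ student is drawn from $D$ and for every other student from $D'$. A student assigned a type~1 proposal to $u$ is by definition a special-$u$ student (her favorite school is $u$), and in Step~2/Step~6 her signal is drawn from $D$; a student assigned a type~$>1$ proposal to $u$ is not special-$u$ (with probability $1-o(1)$, modulo the collision caveat), and gets a signal from $D'$. Students who never apply to $u$ have their $u$-signal drawn in Step~6 from $D'$ as well, matching the model (they are non-special-$u$ with probability $1-o(1)$). Since all these signal draws are mutually independent and independent of everything else, the induced ranking of $u$ is exactly decreasing-order-of-signal with the correct per-student distributions, i.e.\ exactly our model's university preferences. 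Combining the two marginals-and-conditionals gives that the full joint is our model's joint, which is the claim.

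The main obstacle is bookkeeping the $o(1)$-fraction collision events carefully enough that they do not secretly interact with the ``accepted/rejected'' labels we will later exploit: we need that conditioning on the labels does not blow up the probability of a collision, and that the active-student set at each stage is genuinely uniform after conditioning. The clean way around this is exactly the footnote's stance — fix $K,L,M$ constant, note any single student is involved in a collision with probability $O(1/n)$, so by a union bound and negative correlation only an $o(1)$ fraction are ever affected, and all of our downstream claims are stated up to an $o(1)$ fraction of students; so we may simply condition on the (probability $1-o(1)$) event that a given student is collision-free and carry out the above argument there. No heavy calculation is needed beyond this, and the proposition then follows.
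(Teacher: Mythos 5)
Your proposal is correct and follows essentially the same route as the paper's proof: both rest on the principle of deferred decisions, arguing that universities can rank proposals knowing only their type (type $1$ versus type $>1$) before student identities are attached, and that the student lists are uniform because the proposal targets are drawn uniformly at random, with collisions relegated to an ignorable $o(1)$ fraction. Your version simply carries out the marginal-plus-conditional bookkeeping more explicitly than the paper does.
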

\begin{proof}
Essentially we are again just using the principle of deferred decisions. From the university perspective, the point is that we don't need to know exactly which student made a proposal in order to rank it. We just need to know whether it was a type 1 or type $> 1$ proposal. So it is valid for universities to draw preferences over proposals without yet mapping those preferences to a concrete student. For the student preferences, we are simply drawing random universities to be in preference lists first (the list of ``made proposals''), and deferring the decision of exactly which student drew this school to be next on their list. But as all universities in the list are drawn uniformly at random, this is again a valid way to generate uniformly random preference lists for the students.
\end{proof}

Now, we want to claim that becuase $\vec{y}$ solves Equations~\eqref{eq:main}, we can recover an almost-stable matching from the above procedure. The idea is that if we match students to universities according to the ``accepted'' proposals, it would be a stable matching if only every student whose type $i$ proposal was labeled ``rejected'' had a type $i+1$ proposal assigned to them. Unfortunately, there are also $o(n)$ students whose type $i$ proposal was rejected, but who did not follow up with a type $i+1$ proposal. So these students will be unmatched, even though they didn't exhaust their preference list. We call such students inconsistent.

\begin{definition}
For preferences drawn from the above procedure, define a student to be \textbf{consistent} if for all $i$, they are assigned a type $i$ proposal if and only if they were assigned a ``rejected'' type $i-1$ proposal. Call all other students \textbf{inconsistent}. 
\end{definition}
\begin{proposition}\label{prop:matching}
Consider the matching that follows the ``accepted'' proposals. Specifically, if a student is assigned a proposal labeled ``accepted,'' they are matched to that school. If $\vec{y}$ solves Equations~\eqref{eq:main}, then with probability $1-o(1)$, the only blocking pairs for this matching contain inconsistent students.
\end{proposition}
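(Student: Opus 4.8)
The plan is to verify directly that, under the sampling procedure of Appendix~\ref{app:unique}, the matching $\mathcal{M}$ defined by the ``accepted'' proposals has no blocking pair consisting of a \emph{consistent} student $s$ and a university $u$. First I would observe that the procedure's labels are internally coherent from the university side: each university $u$ drew signals for every proposal it received and labeled a proposal ``accepted'' exactly when it is among $u$'s top $L$ received proposals. Hence, at the end, $u$ is matched (in $\mathcal{M}$) to a set of students whose signals are its top $\le L$ among all proposals it received, with the remaining slots (if any) empty only because $u$ received fewer than $L$ proposals.

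Next I would argue that a consistent student $s$ cannot be part of a blocking pair. Suppose $s$ is consistent and $(s,u)$ blocks $\mathcal{M}$. Since $s$ prefers $u$ to its match in $\mathcal{M}$ (or $s$ is unmatched), and $s$'s preference list is exactly ``type $1$ school, type $2$ school, $\dots$'' in order (by construction in steps 3--5), $u$ must be the type $j$ school of $s$ for some $j$ with $s$'s $\mathcal{M}$-partner being a type $j' > j$ school or $\bot$. But consistency of $s$ means $s$ was assigned a type $j$ proposal (to $u$) precisely because its type $j-1$ proposal was rejected, and then, since $s$ did not stop at step $j$, its type $j$ proposal to $u$ must have been labeled ``rejected'' (otherwise $s$ would be matched to $u$, contradicting that $(s,u)$ blocks). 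A ``rejected'' label on $s$'s proposal to $u$ means $u$ received $L$ proposals with signal at least that of $s$, all of which are ``accepted'' and hence matched to $u$ in $\mathcal{M}$; in particular $u$ is matched to $L$ students, each with signal $\ge$ that of $s$, so $u$ does not prefer $s$ to any of its partners. This contradicts $(s,u)$ being a blocking pair. The only remaining case is $j' = j$ but then $s$ is matched to $u$ and $(s,u)$ does not block.

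The one subtlety I would need to close is why a consistent student who was \emph{never} rejected (so matched to its type $j$ school with an ``accepted'' label) or who exhausts all $K$ proposals without acceptance still cannot block: in the first case $s$ is matched to one of its top-$j$ schools and no earlier school ever rejected it only if it never proposed there — but $s$ did propose to its type $1,\dots,j-1$ schools and was rejected by all of them (consistency), so the same ``$L$ higher-signal accepted proposals'' argument as above applies to each of those schools; in the second case $s$ has used all $K$ applications and cannot prefer any unapplied school since its list past position $K$ is irrelevant (a student only blocks with a school it would rank above its partner, and with $K$ applications all of $s$'s relevant comparisons are among schools it applied to). I expect the main obstacle to be bookkeeping the correspondence between the ``accepted''/``rejected'' labels drawn \emph{before} students are assigned to proposals (steps 1--2) and the final matching \emph{after} assignment (steps 3--6) — i.e., making precise that relabeling signals to concrete students in step 6 does not change which proposals are top-$L$ at any university, which is exactly the content of Proposition~\ref{prop:sampling}. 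Once that correspondence is nailed down, the blocking-pair analysis is the short case check above, and the hypothesis that $\vec y$ solves Equations~\eqref{eq:main} enters only to guarantee (via the counting identity $y_{i} = y_{i-1} - f_{i-1}^{D,D',M,L}(\vec y) \pm o(1)$) that all but an $o(n)$ fraction of students are consistent, so that the exceptional blocking pairs indeed involve only inconsistent students.
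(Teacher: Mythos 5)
Your core blocking-pair argument is the same as the paper's: for a consistent student $s$, every school $s$ prefers to its match is one to which $s$ was assigned a ``rejected'' proposal, hence that school has $L$ accepted proposals with higher signal and cannot prefer $s$. That part is fine, and your handling of the never-rejected and list-exhausted cases is correct (a student who uses all $K$ applications ranks $\bot$ above any unapplied school, so those schools cannot block).

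There is, however, a genuine gap: you never verify that ``the matching that follows the accepted proposals'' is a matching at all, and you misplace where the hypothesis that $\vec{y}$ solves Equations~\eqref{eq:main} is actually used. The danger in the sampling procedure is that in step~4 there could be \emph{more} type $i{+}1$ proposals than students whose type $i$ proposal was rejected; then some type $i{+}1$ proposal must land on a student who already holds an ``accepted'' proposal, and if that new proposal is also accepted the student has two partners and the object is not a matching. The paper's first paragraph rules this out: because $\vec{y}$ solves the equations and only $y_i n - o(n)$ type $i$ proposals are made, Lemma~\ref{lem:accepted} gives that with probability $1-o(1)$ the number of rejected type $i$ proposals exceeds the number of type $i{+}1$ proposals for every $i$, so step~4 is feasible and no student receives two accepted proposals. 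This is exactly the source of the ``with probability $1-o(1)$'' in the statement. Your claim that the hypothesis on $\vec{y}$ enters ``only to guarantee that all but $o(n)$ of students are consistent'' is not what the proposition needs --- the proposition makes no claim about how many inconsistent students there are (that is handled later via rejection chains) --- and without the feasibility check above, your deterministic case analysis is applied to an object that may not be a well-defined matching.
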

\begin{proof}
First, we observe that with probability $1-o(1)$, there are always more rejected type $i$ proposals than type $i+1$ proposals made - this follows immediately from the fact that $\vec{y}$ solves Equations~\eqref{eq:main}, Lemma~\ref{lem:accepted}, and the fact that we made $y_in - o(n)$ type $i$ proposals (instead of exactly $y_i n$). So with probability $1-o(1)$, there is no issue with the same student getting mapped to two different ``accepted'' proposals, and the proposed matching is indeed a matching with probability $1-o(1)$. 

Now, consider any possible blocking pair $(s, u)$ where $s$ is consistent. Consistency implies that every school that $s$ prefers to its current match had $L$ ``accepted'' proposals that it prefered to $s$. Therefore, if $s$ prefers $u$ to its current match, $u$ does not prefer $s$ to any of its current matches, and $(s, u)$ can't possibly be a blocking pair. 

Note however that if $s$ is inconsistent, $(s, u)$ can be a blocking pair. If $s$ is inconsistent, then $s$ is unmatched. We know that for all the universities $u$ that $s$ ``proposed'' to, $u$ prefers all $L$ of its partners to $s$. But for the universities on $s$'s list that were generated in step 5) and not as ``proposals,'' we don't know whether or not these universities would prefer $s$ to their current partners, because the label of  ``accept'' and ``reject'' was only determined for the ``proposals.''
\end{proof}

At this point, we're very close to a stable matching, we just need to handle the $o(n)$ inconsistent students. To do this, we just need to let such students continue proposing down their list. This might cause a \emph{rejection chain}, where previously matched students get rejected and continue proposing, which causes other previously matched students to be rejected, etc. We show that the expected number of students affected by a single rejection chain is constant, and therefore all but $o(n)$ of the original matches remain. We begin by confirming that this procedure indeed results in a stable matching. The following lemma is well-known:

\begin{lemma}\label{lem:enter}
Let $\mathcal{M}$ be a stable matching for students $S$ and universities $U$. Then a stable matching for $S \cup \{s\}$ and $U$ can be found by beginning from $\mathcal{M}$, and continuing the student-proposing deferred-acceptance with $s$ proposing.
\end{lemma}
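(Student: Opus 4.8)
The plan is to prove Lemma~\ref{lem:enter} by the standard ``deferred-acceptance is order-independent'' argument, specialized to the situation where we begin from an already-stable configuration $\mathcal{M}$ and introduce the single new student $s$. First I would set up the process precisely: starting from $\mathcal{M}$, have $s$ propose to her favorite university; whenever a university receives a proposal (from $s$ or from any student displaced later), it tentatively keeps its $L$ favorite applicants among those it currently holds together with the new proposer, and rejects the rest; any rejected student proposes to her next-favorite university not yet having rejected her; the process halts when no student is both unmatched (or displaced) and has a university left to propose to. Since each student proposes to each university at most once and there are finitely many (student, university) pairs, the process terminates; call the resulting assignment $\mathcal{M}'$.

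Next I would argue $\mathcal{M}'$ is a matching (each student matched to at most one university, each university to at most $L$), which is immediate from the invariant that universities never exceed capacity and students hold at most one tentative acceptance at a time. Then I would show $\mathcal{M}'$ has no blocking pair. Suppose $(t, u)$ blocks $\mathcal{M}'$, i.e. $t$ prefers $u$ to her $\mathcal{M}'$-partner and $u$ prefers $t$ to one of its $\mathcal{M}'$-partners. The key observation is the monotonicity invariant of deferred acceptance: from $s$'s entry onward, the set of applicants a university holds only ``improves'' — once $u$ holds an applicant $t'$, every later state of $u$ holds $L$ applicants each at least as preferred as the worst applicant $u$ held at any earlier time (a rejected applicant is never re-accepted with a worse alternative present). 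Now consider two cases. If $t$ ever proposed to $u$ during the process, then $t$ was rejected by $u$ (since $t$ prefers $u$ to her final match), which means at that moment $u$ already held $L$ applicants it preferred to $t$; by monotonicity $u$'s final $L$ partners are all preferred to $t$, contradicting that $(t,u)$ blocks. If $t$ never proposed to $u$ during the process, then $t$ was never displaced from a university she weakly prefers less than $u$ — in particular $t$'s $\mathcal{M}'$-partner equals her $\mathcal{M}$-partner (or $t=s$, which is impossible since $s$ proposes starting from her top choice and would have proposed to $u$ before settling for anything worse), and moreover $u$'s applicant pool only improved relative to $\mathcal{M}$, so $u$'s $\mathcal{M}'$-partners are each weakly preferred by $u$ to its $\mathcal{M}$-partners; but then $(t,u)$ would already have blocked the stable matching $\mathcal{M}$, a contradiction. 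Hence $\mathcal{M}'$ is stable for $S \cup \{s\}$ and $U$.

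I expect the main obstacle to be making the ``$t$ never proposed to $u$'' case fully rigorous, since it requires carefully relating the final state of $\mathcal{M}'$ back to $\mathcal{M}$ for both $t$ and $u$ simultaneously. The clean way to handle it is to phrase the monotonicity invariant once and for all as: for every university $v$, the multiset of applicants $v$ tentatively holds is nondecreasing (in $v$'s preference order, entry-wise) throughout the run, with the run initialized at $\mathcal{M}$; and symmetrically, for every student other than $s$, the university she holds is nonincreasing in her preference order. Given these two invariants the case analysis above goes through mechanically. Since this is exactly the classical Gale--Shapley argument (and the lemma is noted as well-known), I would keep the write-up short, citing Theorem~\ref{thm:GS} for the underlying structure and spelling out only the monotonicity invariants and the two-case contradiction.
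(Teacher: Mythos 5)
Your proof is correct and takes essentially the same route as the paper's: split on whether $t$ proposed to $u$ during the continuation, handle the first case by the rejection itself, and handle the second by stability of $\mathcal{M}$ together with the monotone improvement of $u$'s held applicants. One small inaccuracy in your second case: $t$'s $\mathcal{M}'$-partner need not equal her $\mathcal{M}$-partner (she may be displaced downward from it without ever reaching $u$, since $u$ lies above it on her list), but your argument never actually needs that claim---it suffices that $t$ prefers $u$ to her $\mathcal{M}$-partner, so $(t,u)$ would have blocked $\mathcal{M}$, and monotonicity of $u$'s partners carries the contradiction to $\mathcal{M}'$.
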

\begin{proof}
Let $\mathcal{M}'$ denote the matching at the end of the procedure, and consider any possible blocking pair $(t, u)$. If $t$ prefers $u$ to its current match, then either $t$ proposed to $u$ during the rejection chain, or $t$ was already matched to a university it liked worse than $u$ in $\mathcal{M}$. If $t$ proposed to $u$ during the rejection chain, then $t$ was rejected, so clearly $u$ has $L$ partners it prefers to $t$. If $t$ was already matched to a university it liked worse than $u$ in $\mathcal{M}$, then $u$ must have already had $L$ partners it prefered to $t$ in $\mathcal{M}$, as $\mathcal{M}$ was stable. As $u$'s partners can only improve going from $\mathcal{M}$ to $\mathcal{M}'$, $u$ must still have $L$ partners it prefers to $t$ in $\mathcal{M}'$, so $(t, u)$ cannot be a blocking pair.
\end{proof}

\begin{lemma}\label{lem:rejection}
Conditioned on steps 1-4 in the sampling process, and over the randomness in steps 5 and 6, the expected number of students affected by each rejection chain is $\leq 2e^{K/M}$.
\end{lemma}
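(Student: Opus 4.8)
The plan is to show that a rejection chain terminates with constant probability at each step, so that its length is stochastically dominated by a geometric random variable, and then to observe that the number of students it affects is at most its length. First I would count unfilled universities. A university that receives no proposal during steps 1--4 has no accepted proposal and is therefore unfilled, and whether a university receives a proposal is determined by step 1 alone, which draws at most $\sum_{i\in[K]} y_i n \le Kn$ universities uniformly at random (here $\sum_i y_i \le K$ since each student makes at most $K$ proposals). The events ``$u$ receives no proposal'' are negatively correlated, so by the Chernoff--Hoeffding bound of~\cite{PanconesiS97,ImpagliazzoK10} the number of such universities is at least $Mn(1-\tfrac{1}{Mn})^{Kn} - o(n) = (e^{-K/M} - o(1))Mn$ with probability $1 - o(1)$; I condition on steps 1--4 lying in this event. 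Moreover there are only $o(n)$ inconsistent students (this is precisely why Proposition~\ref{prop:matching} leaves $o(n)$ students unmatched), hence at most $o(n)$ rejection chains, and each chain converts at most one unfilled university into a filled one when it terminates; so throughout every rejection chain at least $(e^{-K/M} - o(1))Mn$ universities remain unfilled.

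Next I would analyze a single chain using the principle of deferred decisions. During a rejection chain exactly one student is ``floating,'' and she proposes to the next university on her preference list; since step 5 fills in all of a student's preferences beyond her assigned proposals uniformly at random, we may reveal this next university lazily, as uniform among those she has not already proposed to. No university ever becomes unfilled during a chain --- a filled university stays filled, since whenever it accepts a proposal it simultaneously rejects its least favorite current partner --- so every university the floating student has previously proposed to is filled, and her next proposal lands on an unfilled university with probability at least $\#\{\text{unfilled}\}/(Mn) \ge e^{-K/M} - o(1)$, conditionally on the full history so far. If it does, that proposal is accepted with nobody displaced and the chain terminates; otherwise the chain continues (the floating student re-proposes, or a displaced student becomes floating). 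Hence the number of proposals in a rejection chain is dominated by a geometric random variable with success probability $e^{-K/M} - o(1)$, and its expectation is at most $(e^{-K/M} - o(1))^{-1} = e^{K/M}(1 + o(1))$.

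Finally, every student affected by the chain --- the initiating inconsistent student, every student displaced along the way, and the student matched at the end --- makes at least one proposal during the chain, so the number of affected students is at most the number of proposals made, and therefore has expectation at most $e^{K/M}(1 + o(1)) \le 2e^{K/M}$ for $n$ large enough. The one point that needs care is the bookkeeping from the first paragraph: the lower bound $(e^{-K/M} - o(1))Mn$ on the number of unfilled universities must be maintained simultaneously across all $o(n)$ rejection chains, but since each chain consumes at most one unfilled university the total loss is $o(n)$ and is swallowed by the error term. Conditioning on steps 1--4 as in the statement, together with the $1-o(1)$ concentration event above, is then exactly what is needed for the proof of Theorem~\ref{thm:unique}.
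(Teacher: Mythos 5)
Your proof is correct and follows essentially the same route as the paper's: lower-bound the number of unfilled universities by those receiving no proposal (roughly $Mne^{-K/M}$, via Chernoff), then argue that each step of a rejection chain lands on such a university with probability at least $X/(Mn)$, so the chain length is geometrically dominated. The only cosmetic difference is where the factor of $2$ comes from --- the paper takes $X \ge Mne^{-K/M}/2$ with high probability and bounds $\mathbb{E}[Mn/X]$, while you absorb the $o(1)$ losses for large $n$; your explicit bookkeeping that each of the $o(n)$ chains consumes at most one unfilled university is a nice touch the paper leaves implicit.
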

\begin{proof}
Again using the principle of deferred decisions, we can imagine sampling a student's next favorite school only when necessary to continue the rejection chain. A rejection chain certainly terminates when a student's next favorite school has fewer than $L$ partners. If there are $X$ such universities, then the probability that a rejection chain terminates at each step is independently $X/(Mn)$. So the expected number of steps until the rejection chain terminates is just $Mn/X$. So we just need to compute $\mathbb{E}[Mn/X]$. The number of universities with fewer than $L$ partners is certainly at least the number of universities that isn't in any student's top $K$. The expected number of such universities is $Mn(1-1/(nM))^{nK} \approx Mne^{-K/M}$, and by Chernoff bound is at least $Mne^{-K/M}/2$ with probability $1-e^{-\Omega(n)}$. So $\mathbb{E}[Mn/X] \leq 2e^{K/M}$. 
\end{proof}

\begin{corollary}\label{cor:rejection}
Conditioned on steps 1-4 in the sampling process, and over the randomness in steps 5 and 6, the total number of students affected by each rejection chain is $o(n)$ with probability at least $1-o(1)$. 
\end{corollary}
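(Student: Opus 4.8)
The plan is to upgrade the expectation bound of Lemma~\ref{lem:rejection} to a high-probability bound on the aggregate effect of all the rejection chains we must run. Recall that there are $o(n)$ inconsistent students (this is where the $o(n)$ comes from: it is exactly the gap between the $y_in-o(n)$ type-$i$ proposals made in step~1 and the $f_{i-1}^{D,D',M,L}(\vec y)n$ rejected type-$(i-1)$ proposals, which by Lemma~\ref{lem:accepted} and the fact that $\vec y$ solves~\eqref{eq:main} differ by only $o(n)$). We process these inconsistent students one at a time, letting each propose down their list via Lemma~\ref{lem:enter}, so that after each insertion we again have a stable matching on the students processed so far; after all $o(n)$ of them are inserted we have a stable matching on all of $S$, and by Proposition~\ref{prop:matching} it agrees with the ``accepted''-proposal matching except on students touched by some rejection chain. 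So it suffices to show that the total number of students ever touched, summed over all $o(n)$ chains, is $o(n)$ with probability $1-o(1)$.

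The key steps, in order: (i) Fix the randomness of steps~1--4. Condition additionally on the (random) number $X$ of universities with fewer than $L$ partners after step~4; by the Chernoff argument already used in Lemma~\ref{lem:rejection}, $X \ge Mne^{-K/M}/2$ with probability $1-e^{-\Omega(n)}$, so we may assume this event. (ii) For each of the $o(n)$ inserted students, the length of its rejection chain is stochastically dominated by a geometric random variable with success probability $\ge e^{-K/M}/(2M)$ — at each step the next school on the proposing student's list is undrawn and uniform, hence lands on a sub-capacity university (ending the chain) with probability $\ge X/(Mn) \ge e^{-K/M}/(2M)$, independently of the history, exactly as in Lemma~\ref{lem:rejection}. (One subtlety: as chains proceed, schools fill up and $X$ can only decrease; but $X$ can drop by at most the total number of students moved so far, and as long as that total stays $o(n)$ — which we verify a posteriori / by a stopping-time argument — we retain $X \ge Mne^{-K/M}/4$, say, which is all we need.) (iii) Therefore the total number of students touched is stochastically dominated by a sum $T = \sum_{j=1}^{o(n)} G_j$ of independent geometrics, each with mean $\le 2e^{K/M}M = O(1)$ and bounded variance. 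Hence $\mathbb E[T] = o(n)$, and since $T$ is a sum of independent bounded-mean, bounded-variance variables it concentrates: e.g.\ by Chebyshev, $\Pr[T \ge 2\mathbb E[T] + \sqrt{n}] = o(1)$, and $2\mathbb E[T] + \sqrt n = o(n)$. (Alternatively one invokes a Chernoff bound for sums of independent geometrics.) Removing the conditioning on the $1-e^{-\Omega(n)}$-probability event for $X$ costs only an additive $e^{-\Omega(n)} = o(1)$ in the failure probability. This gives the claim.

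The main obstacle is step~(ii)'s subtlety: the geometric-domination argument requires that the pool of sub-capacity universities stay $\Omega(n)$ throughout \emph{all} the chains, but the chains themselves shrink that pool, so the per-step termination probabilities are not literally independent across chains. The clean way to handle this is a stopping-time / union-bound argument: let $\tau$ be the first time the running total of touched students exceeds $\varepsilon n$ (for a slowly-vanishing $\varepsilon$ to be chosen so that $\varepsilon n$ dominates the number of inconsistent students); before time $\tau$ we have $X \ge Mne^{-K/M}/2 - \varepsilon n = \Omega(n)$, so the geometric domination is valid up to $\tau$, and the concentration bound then shows $T < \varepsilon n$ with probability $1-o(1)$, i.e.\ $\tau$ is never reached. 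Everything else is routine concentration for independent, constant-mean, constant-variance random variables, using that the number of chains is $o(n)$ rather than $\Theta(n)$ so that even the expected aggregate displacement is $o(n)$.
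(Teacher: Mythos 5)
Your proposal is correct, but it is a substantially heavier argument than the one the paper uses. The paper's entire proof is one sentence: the total number of affected students, summed over all rejection chains, has expectation $o(n)$ (there are $o(n)$ inconsistent students and each chain affects $O(1)$ students in expectation by Lemma~\ref{lem:rejection}), so Markov's inequality with an appropriately chosen threshold finishes it --- if $\E[T]=o(n)$ then $\Prob[T\geq \sqrt{n\E[T]}]\leq \sqrt{\E[T]/n}=o(1)$ while $\sqrt{n\E[T]}=o(n)$. This sidesteps entirely the issue you spend most of your effort on: Markov needs no independence, no variance bound, and no geometric domination, so the dependence between chains is irrelevant once the expectation bound is in hand. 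That said, your extra care is not wasted. Your stopping-time observation --- that later chains see a depleted pool of sub-capacity universities, so the per-chain expectation bound of Lemma~\ref{lem:rejection} does not automatically apply to each chain run after the others --- is a real subtlety that the paper's one-line proof glosses over, since the Markov argument still requires $\E[T]=o(n)$ and hence a per-chain $O(1)$ bound that is valid conditionally on the earlier chains. Your argument (truncate at the first time $\varepsilon n$ students have been touched, note the pool stays $\Omega(n)$ up to that time, and verify a posteriori that the truncation is never triggered) is exactly the right way to make that rigorous; one could then still finish with Markov rather than Chebyshev. So: same skeleton (Lemma~\ref{lem:enter} to insert inconsistent students one at a time, bound the aggregate chain length), but you use concentration where the paper uses only Markov, and in exchange you patch a gap the paper leaves implicit.
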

\begin{proof}
This is just an application of Markov's inequality with appropriately chosen $o(1)$ to the random variable summing over all rejection chains of the number of students affected, which has expectation $o(n)$. 
\end{proof}

Now we are ready to conclude the proof of Theorem~\ref{thm:unique}:

\begin{proof}[Proof of Theorem~\ref{thm:unique}]
By Proposition~\ref{prop:sampling}, the sampling procedure provides a valid way to generate preferences in our model. By Proposition~\ref{prop:matching}, and the fact that $\vec{y}$ is solves Equations~\eqref{eq:main}, with probability $1-o(1)$, the resulting instance has a matching where $(y_{i+1}-y_{i})n \pm o(n)$ students are matched to their $i^{th}$ favorite school for all $i \in [K-1]$, and this matching is almost stable. By Lemma~\ref{lem:enter} and Corollary~\ref{cor:rejection}, the matching can be modified to a stable matching while only affecting $o(n)$ students. Therefore, the result is a stable matching where indeed $(y_{i+1}-y_i)n \pm o(n)$ students are matched to their $i^{th}$ favorite school for all $i \in [K-1]$. 
\end{proof}

\end{document}